\documentclass[12pt, reqno]{amsart}
\usepackage{latexsym, enumerate, imakeidx,mathrsfs, color,amscd,amsmath,amsfonts,amssymb,geometry,color,amsthm, float}
\usepackage{booktabs,array}
\usepackage[bookmarksnumbered, colorlinks, plainpages]{hyperref}
\newtheorem{theorem}{Theorem}[section]
\newtheorem{lemma}[theorem]{Lemma}
\newtheorem{proposition}[theorem]{Proposition}
\newtheorem{corollary}[theorem]{Corollary}
\theoremstyle{definition}
\newtheorem{definition}[theorem]{Definition}
\newtheorem{example}[theorem]{Example}

\theoremstyle{remark}

\numberwithin{equation}{section}

\DeclareMathOperator{\Tr}{Tr}
\DeclareMathOperator{\rank}{rank}
\DeclareMathOperator{\ran}{ran}
\DeclareMathOperator{\spanop}{span}
\DeclareMathOperator{\diag}{diag}
\DeclareMathOperator{\supp}{supp}

\newcommand{\C}{\mathbb C}
\newcommand{\R}{\mathbb R}
\newcommand{\B}{\mathcal B}
\newcommand{\D}{\mathcal D}
\newcommand{\Herm}{\operatorname{Herm}}
\newcommand{\Pone}{\mathcal P_1}
\newcommand{\RankTwo}{\mathcal R_2}
\newcommand{\Sec}{\mathcal S_2^0}
\newcommand{\pr}{\operatorname{pr}}
\newcommand{\Cr}{\operatorname{Cr}}
\newcommand{\norm}[1]{\left\lVert #1\right\rVert}
\newcommand{\abs}[1]{\left|#1\right|}
\newcommand{\inner}[2]{\left\langle #1,#2\right\rangle}

\newcommand{\rtensor}{\mathbin{\overline{\otimes}}}

\begin{document}

\title{FRAME PHASE RETRIEVABILITY AND STATE DISTINGUISHABILITY OF QUANTUM CHANNELS}

\author[D. Han]{Deguang Han}
\address{Department of Mathematics, University of Central Florida\\
Orlando, FL 32816, USA}
\email{deguang.han@ucf.edu}

\author[K. Liu]{Kai Liu}
\address{Department of Mathematics, University of Central Florida\\
Orlando, FL 32816, USA}
\email{kai.liu@ucf.edu}

\subjclass[2020]{Primary 42C15; Secondary 81P45, 47A13, 94A12}

\makeatletter
\renewcommand{\subjclassname}{\textup{2020} Mathematics Subject Classification}
\makeatother
\keywords{phase retrieval; frames; quantum channels; pure-state injectivity;
state distinguishability; zero-error correction; twirling channels}

\begin{abstract}
This survey introduces the role of frame phase retrieval in pure-state identification and information preservation by quantum channels.  For unit vectors, the lift $x\mapsto xx^*$ identifies vectors that differ only by a global phase with the same rank-one quantum state, converts frame intensities into linear functionals of the lifted state, and makes the adjoint pullback of output observables an operator-valued measurement on the input system.  From this viewpoint, a channel is phase retrievable exactly when it is injective on pure states.  We develop this correspondence through Choi-rank-two linear combinations of Kraus operators, higher-rank relative spectra, structural obstructions, and constructions with prescribed Choi rank.  We distinguish injective identification from full tomography, perfect one-shot discrimination, zero-error classical communication, and exact quantum correction.  Twirling channels then provide a structured setting in which commutants, irreducible dimensions, multiplicities, orbit-frame orthogonality, coding indices, and phase-retrievable subspaces can be read from group representations.  
\end{abstract}

\maketitle
\pagestyle{plain}

\section{Introduction}
\label{sec:introduction}

Phase retrieval asks whether a vector can be recovered, up to a scalar of unit modulus, from quadratic measurements.  In quantum mechanics, this ambiguity is not a defect: unit vectors $x$ and $e^{\mathrm i\theta}x$ represent the same pure state $xx^*$.  The following connections
\begin{equation}
 x\sim e^{\mathrm i\theta}x
 \quad\longleftrightarrow\quad xx^*,
 \qquad
 |\inner{x}{f_j}|^2=\Tr(xx^*f_jf_j^*)
 \label{eq:basic-lift}
\end{equation}
therefore turn a phaseless frame measurement into a Born expectation of a rank-one positive operator.  If a channel $\Phi:\B(\mathcal H_A)\to\B(\mathcal H_B)$ acts between the preparation and the measurement, then
\begin{equation}
\begin{split}
 x\text{ modulo phase}
 &\longmapsto xx^*
 \xrightarrow{\ \Phi\ }\Phi(xx^*)
 \xrightarrow{\ F_j\ }
 \Tr\bigl(\Phi(xx^*)F_j\bigr)  \\
 &=\Tr\bigl(xx^*\Phi^*(F_j)\bigr).
\end{split}
\label{eq:information-flow}
\end{equation}
Thus observing the output is equivalent to measuring the pulled-back operators $\Phi^*(F_j)$ at the input.  This is the central bridge of this survey: frame theory supplies the geometry of the measurement, while the channel determines which input observables remain accessible.

The question organizing the discussion is consequently precise: \emph{When does a quantum channel retain enough information to distinguish all pure input states, and how can frame geometry characterize or construct this property?} The framework in \cite{LiuHan2025} introduced phase-retrievable channels and developed Kraus-operator and relative-spectrum criteria, together with frame-based constructions.  The subsequent twirling channel theory in \cite{HanLiu2024} uses representation theory to compute phase-retrievable subspaces and several zero-error quantities.  The operator-valued phase-retrieval results in \cite{HanJuste2019} supply selected background for passing from scalar frame coefficients to general quadratic measurements.

Throughout, ``distinguishability'' means injective identification unless stated otherwise.  Perfect single-shot discrimination, full informational completeness, and exact correction are different tasks.
Section~\ref{sec:distinction} compares them and presents their operational relations \cite{Helstrom1976,Watrous2018}.

The survey is organized as follows.  Section~\ref{sec:measurement-geometry} develops concepts of the lift, positive operator measurements, POVMs, and channel pullbacks.  Section~\ref{sec:criteria-design} gives pure-state criteria, stability, a two-Kraus linear combination criterion, a higher-rank relative-spectrum test, and related channel designs.  Section~\ref{sec:distinction} compares identification, zero-error discrimination, and correction. Section~\ref{sec:twirling} illustrates the general theory through the representation-theoretic structure of twirling channels.  The final section draws on related work in the literature to introduce further connections between frame theory and quantum information and to outline promising directions and open questions for future research.

%%%%%%%%%%%%%%%%%%%%%%%%%%%%%%%%%%%%%%%%%%%%%%%%%%%%%%%%%%%%%%%%%%%%%%%%%%%%%%%%%%%%%%%%%%%%%%%%%%%%%%%%%%%%%%%%%
%%%%%%%%%%%%%%%%%%%%%%%%%%%%%%%%%%%%%%%%%%%%%%%%%%%%%%%%%%%%%%%%%%%%%%%%%%%%%%%%%%%%%%%%%%%%%%%%%%%%%%%%%%%%%%%%%
%%%%%%%%%%%%%%%%%%%%%%%%%%%%%%%%%%%%%%%%%%%%%%%%%%%%%%%%%%%%%%%%%%%%%%%%%%%%%%%%%%%%%%%%%%%%%%%%%%%%%%%%%%%%%%%%%
%%%%%%%%%%%%%%%%%%%%%%%%%%%%%%%%%%%%%%%%%%%%%%%%%%%%%%%%%%%%%%%%%%%%%%%%%%%%%%%%%%%%%%%%%%%%%%%%%%%%%%%%%%%%%%%%%
%%%%%%%%%%%%%%%%%%%%%%%%%%%%%%%%%%%%%%%%%%%%%%%%%%%%%%%%%%%%%%%%%%%%%%%%%%%%%%%%%%%%%%%%%%%%%%%%%%%%%%%%%%%%%%%%%

\section{Frames, operator measurements, and channel pullbacks}
\label{sec:measurement-geometry}

\subsection{Frames and the rank-one lift}

All Hilbert spaces are finite dimensional and complex unless the real field is explicitly indicated.  Inner products are linear in the first variable. For $x,y\in\mathcal H$, the notation $x\rtensor y=xy^*$ means the
rank-one map $z\mapsto\inner{z}{y}x$.  We write
 $\D(\mathcal H)=\{\rho=\rho^*:\rho\ge0,\ \Tr\rho=1\},
 \;
 \Pone(\mathcal H)=\{xx^*:\norm{x}=1\}.$
The self-adjoint space $\Herm(\mathcal H)$ has the Hilbert--Schmidt inner product $\inner{X}{Y}_{\rm HS}=\Tr(XY)$ and norm $\norm{X}_2$.

A family $\mathcal F=\{f_j\}_{j=1}^N\subset\mathcal H$ is a frame if
\begin{equation*}
 A\norm{x}^2\le \sum_{j=1}^N|\inner{x}{f_j}|^2\le B\norm{x}^2
 \qquad(x\in\mathcal H)
\end{equation*}
for some $0<A\le B<\infty$.  It is Parseval when $\sum_jf_jf_j^*=I$.  The positive invertible operator $S_{\mathcal F}=\sum_{j=1}^N f_jf_j^*$ is the \emph{frame operator}; replacing $f_j$ by $S_{\mathcal F}^{-1/2}f_j$ is the normalization of the frame.  The family is phase retrievable if its intensity measurements determine every vector uniquely up to a global phase. More precisely, if
    $|\langle x,f_i\rangle|^2 = |\langle y,f_i\rangle|^2 \; \text{for all } i,$ then \(y=e^{\mathrm{i}\theta}x\) for some $\theta\in\mathbb{R}$, or equivalently, $xx^*=yy^*$. Over a real Hilbert space, this condition becomes $y=\pm x$.  For standard background on frames, frame operators, and canonical Parseval frames, see \cite[Chapters~1--2]{Christensen2016}. Define the lifted analysis map
\begin{equation}
 \mathcal A_{\mathcal F}:\Herm(\mathcal H)\longrightarrow\R^N,
 \qquad
 \mathcal A_{\mathcal F}(X)=\bigl(\Tr(Xf_jf_j^*)\bigr)_{j=1}^N.
 \label{eq:lifted-analysis}
\end{equation}
For a subspace $\mathcal M\subseteq\mathcal H$, put
\begin{equation}
 \begin{aligned}
 \RankTwo(\mathcal M)
   &=\{X\in\Herm(\mathcal M):\rank X\le2\},\\
 \Sec(\mathcal M)
   &=\{X\in\Herm(\mathcal M):\Tr X=0,\ \rank X\le2\}.
 \end{aligned}
 \label{eq:secant-cone}
\end{equation}
The first set is the relevant lifted kernel variety for unconstrained vectors. The second is its normalized pure-state secant cone.

The following standard lifted-kernel criterion is the finite-dimensional
formulation used in frame phase retrieval
\cite{BalanCasazzaEdidin2006,BandeiraEtAl2014}.

\begin{theorem}[Lifted kernel criterion]
\label{thm:lifted-kernel}
Consider self-adjoint operators $M_1,\ldots,M_N$ on $\mathcal H$, and let $\mathcal A:\Herm(\mathcal H)\to\R^N$ be defined by $\mathcal A(X)=(\Tr(XM_j))_{j=1}^N$.  The measurements $\{\Tr(xx^*M_j)\}$ determine $xx^*$ for every $x\in\mathcal H$ if and only if
\begin{equation}
 \ker\mathcal A\cap\{xx^*-yy^*:x,y\in\mathcal H\}=\{0\}.
 \label{eq:exact-secant-kernel}
\end{equation}
If the $M_j$ are positive and their sum is positive definite, this is equivalent to
\begin{equation}
 \ker\mathcal A\cap\RankTwo(\mathcal H)=\{0\}.
 \label{eq:rank-two-kernel}
\end{equation}
In particular, \eqref{eq:rank-two-kernel} characterizes phase retrieval by a vector frame with $M_j=f_jf_j^*$.
\end{theorem}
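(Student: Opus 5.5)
The first equivalence follows directly from linearity of $\mathcal A$. By definition, the measurements determine $xx^*$ precisely when $\mathcal A(xx^*)=\mathcal A(yy^*)$ forces $xx^*=yy^*$. Since $\mathcal A(xx^*)-\mathcal A(yy^*)=\mathcal A(xx^*-yy^*)$, this says exactly that no nonzero difference $xx^*-yy^*$ lies in $\ker\mathcal A$, which is \eqref{eq:exact-secant-kernel}. No hypotheses on the $M_j$ are needed here.

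For the second equivalence, assume the $M_j\ge0$ and $S=\sum_jM_j>0$. I would first show that the difference set $\{xx^*-yy^*\}$ is contained in $\RankTwo(\mathcal H)$. This gives \eqref{eq:rank-two-kernel}$\Rightarrow$\eqref{eq:exact-secant-kernel} at once. For the converse I would take a nonzero $X\in\ker\mathcal A\cap\RankTwo(\mathcal H)$ and write its spectral decomposition $X=\lambda uu^*+\mu vv^*$ with $u\perp v$ unit vectors.

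The key point, and the only place positivity is used, is that $X$ cannot be semidefinite. Summing over $j$ gives $\Tr(XS)=\sum_j\Tr(XM_j)=0$. If $X\ge0$ and $X\neq0$, then $\Tr(XS)\ge\lambda_{\min}(S)\Tr X>0$, a contradiction. The same argument rules out $X\le0$. Hence, after relabeling, $\lambda>0>\mu$, and we can set $x=\sqrt{\lambda}\,u$ and $y=\sqrt{-\mu}\,v$. Then $X=xx^*-yy^*$ is a nonzero element of $\ker\mathcal A$, so \eqref{eq:exact-secant-kernel} fails. This completes the contrapositive.

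The frame statement is the special case $M_j=f_jf_j^*$. These operators are positive, and $\sum_jf_jf_j^*=S_{\mathcal F}$ is positive definite by the lower frame bound. Moreover, $\Tr(xx^*f_jf_j^*)=|\inner{x}{f_j}|^2$, so determining $xx^*$ is exactly phase retrievability.

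The only step needing care is excluding semidefinite rank-$\le2$ kernel elements. This is where the hypothesis $\sum_jM_j>0$ is essential: without it, a positive rank-two operator supported on the common kernel of the $M_j$ would lie in $\ker\mathcal A$ without being a difference of two pure states.
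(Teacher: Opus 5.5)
Your proof is correct and is essentially the paper's argument: linearity of $\mathcal A$ gives the first equivalence, and $\Tr(XS)=0$ with $S=\sum_jM_j>0$ excludes semidefinite elements of $\ker\mathcal A\cap\RankTwo(\mathcal H)$, so every nonzero such element has one positive and one negative eigenvalue and has the form $xx^*-yy^*$. One small correction to your closing remark: because $y=0$ is allowed, any $x\ne0$ with $Sx=0$ gives $0\ne xx^*\in\ker\mathcal A$, so for positive $M_j$ both conditions fail together when $S$ is singular; the hypothesis $S>0$ is used in your step but is not essential to the equivalence itself.
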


\begin{proof}[Proof sketch]
Equality of the measurement lists for $x$ and $y$ is exactly the assertion $xx^*-yy^*\in\ker\mathcal A$, proving the first statement.  Every such difference is self-adjoint and has rank at most two.  Conversely, positivity and the positive definiteness of $\sum_jM_j$ exclude a nonzero positive or negative semidefinite kernel element.  Hence every nonzero element of $\ker\mathcal A\cap\RankTwo(\mathcal H)$ has one positive and one negative eigenvalue and can be written as $xx^*-yy^*$.  This gives the second formulation.  See \cite{BalanCasazzaEdidin2006,BandeiraEtAl2014} for the classical frame criterion and its phase-retrieval interpretation.
\end{proof}

If $S$ is invertible, then $\{f_j\}$ is phase retrievable if and only if $\{Sf_j\}$ is phase retrievable: the two lifted kernels are related by invertible matrices.  In particular, frame normalization preserves phase retrieval and turns the $f_jf_j^*$ into a POVM.

A frame has the \emph{complement property} if, for every $\Omega\subseteq\{1,\ldots,N\}$, either $\spanop\{f_j:j\in\Omega\}=\mathcal H$ or $\spanop\{f_j:j\notin\Omega\}=\mathcal H$.  Over $\R$, this property characterizes phase retrieval and the minimum length in $\R^n$ is $2n-1$.  Over $\C$, it is only necessary \cite{BalanCasazzaEdidin2006}.  Write
\begin{equation}
 d_{\mathbb F}(n)=\min\{N:\text{a phase-retrievable frame of $N$ vectors
 exists in $\mathbb F^n$}\}.
 \label{eq:d-field}
\end{equation}
For $n\ge2$, generic complex frames of length $4n-4$ are phase retrievable \cite{ConcaEtAl2015}, but $4n-4$ is not the exact minimum in every dimension.  The low-dimensional values include $d_{\C}(2)=4$ and $d_{\C}(3)=8$ \cite{ConcaEtAl2015,HeinosaariMazzarellaWolf2013}. Combining Vinzant's eleven-vector construction with Huang's recent lower bound gives $d_{\C}(4)=11$ \cite{Vinzant2015,Huang2026}.  These counts concern rank-one intensity measurements.  In contrast, the construction in \cite{HeinosaariMazzarellaWolf2013,Huang2026} gives a pure-state informationally complete POVM with ten outcomes in dimension four, whose effects need not have rank one.  Hence general positive effects can use fewer outcomes than rank-one effects in this dimension.

\begin{example}[The same three vectors over two fields] \label{ex:three-vectors}
The family $\{e_1,e_2,e_1+e_2\}$ is phase retrievable in $\R^2$ by the complement property.  The same vectors fail in $\C^2$: the unit vectors $x=2^{-1/2}(1,\mathrm i)^t, \; y=2^{-1/2}(1,-\mathrm i)^t$ give the same three intensities, although $xx^*\ne yy^*$.  The three measurements match the diagonal and the real part of the coherence but miss its imaginary part.  Adding $e_1-\mathrm i e_2$ supplies that missing quantity and gives a four-vector phase-retrievable frame in $\C^2$.
\end{example}

\subsection{Positive operator measurements and POVMs}

A finite family $\{T_j\}_{j=1}^N\subset\B(\mathcal H)$ is an \emph{operator-valued frame} when
\begin{equation*}
 A I\le \sum_{j=1}^N T_j^*T_j\le B I
\end{equation*}
for some $A,B>0$; it is Parseval when the sum equals $I$.  This is the finite-dimensional form of the general operator-valued-frame definition\cite{KaftalLarsonZhang2009}.  Following \cite{HanJuste2019}, such a family is \emph{phase retrievable from quadratic measurements} if the values $\{\inner{x}{T_jx}\}_{j=1}^N$ determine $xx^*$.  For positive semidefinite measurement operators, there is an equivalent norm-measurement formulation: if $M_j\ge0$ and $B_j=M_j^{1/2}$, then
\begin{equation*}
 \inner{x}{M_jx}=\norm{B_jx}^2.
\end{equation*}
Thus $M_j$ is the measured effect, whereas $B_j$ is its operator-valued-frame factor.  Keeping these objects distinct avoids a minor but common notational ambiguity.

A self-adjoint family $\{M_j\}_{j=1}^N$ is \emph{pure-state informationally complete} when its expectation values distinguish all elements of $\Pone(\mathcal H)$ \cite{HeinosaariMazzarellaWolf2013}.  By terminology from frame theory, it is a phase-retrievable operator-valued measurement.  A POVM is a positive family $F_j\ge0$ with $\sum_jF_j=I$.  Its square roots form a Parseval operator-valued frame because
\begin{equation}
 \sum_j\norm{F_j^{1/2}x}^2=\norm{x}^2,
 \qquad
 \norm{F_j^{1/2}x}^2=\Tr(xx^*F_j).
 \label{eq:povm-factorization}
\end{equation}
For rank-one effects, this is exactly the classical correspondence between normalized tight frames and rank-one generalized quantum measurements\cite{EldarForney2002}.

The distinction between pure-state and full-state tomography is a distinction between kernel sets.  If $\mathcal L=\spanop_{\R}\{F_1,\ldots,F_N\}\subseteq\Herm(\mathcal H)$, then
\begin{equation}
 \begin{array}{ll}
 \text{all states are distinguished}
 &\Longleftrightarrow\ \mathcal L=\Herm(\mathcal H),\\[1mm]
 \text{all pure states are distinguished}
 &\Longleftrightarrow\ \mathcal L^\perp\cap\Sec(\mathcal H)=\{0\}.
 \end{array}
 \label{eq:full-versus-pure}
\end{equation}
Indeed, if $0\ne X\in\mathcal L^\perp$, then $\Tr X=0$ because $I\in\mathcal L$, and sufficiently small perturbations $I/n\pm tX$ are two states with the same data.  The pure-state assertion is Theorem~\ref{thm:lifted-kernel}.  Consequently full informational completeness implies pure-state informational completeness, but not conversely; see \cite{HeinosaariMazzarellaWolf2013,Finkelstein2004,FlammiaEtAl2005,CarmeliEtAl2014}. Terminology in the earlier literature is not uniform: some minimal-outcome results concern identification of almost every pure state, whereas ``pure-state informationally complete'' in this csurvey always means identification of every pure state.

\begin{lemma}[Cross-term criterion {\cite[Lemma~2.3]{HanJuste2019}}]
\label{lem:cross-term}
Let $\{T_j\}_{j=1}^N$ be an operator-valued frame on $\mathcal H$.  It is phase retrievable from quadratic measurements if and only if, whenever $u,v$ are nonzero and $u\notin\mathrm i\R v$, at least one $j$ satisfies
\begin{equation}
 \inner{u}{T_jv}+\overline{\inner{u}{T_j^*v}}\ne0.
 \label{eq:cross-term}
\end{equation}
For a self-adjoint family on a real Hilbert space this reduces to $\spanop\{T_jv:j\}=\mathcal H$ for every $v\ne0$.
\end{lemma}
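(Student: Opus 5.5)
The plan is to use the standard polarization trick: substitute $x=u+v$ and $y=u-v$. This converts the pure-state identification condition into a condition on cross terms. Write $q_j(x)=\inner{x}{T_jx}$ for the quadratic measurements. Phase retrievability means that $q_j(x)=q_j(y)$ for all $j$ forces $xx^*=yy^*$. Every pair $x,y$ arises as $x=u+v$, $y=u-v$ with $u=(x+y)/2$ and $v=(x-y)/2$, and conversely. Expanding with the inner product linear in the first slot gives
\[
 q_j(u+v)-q_j(u-v)=2\bigl(\inner{u}{T_jv}+\inner{v}{T_ju}\bigr)
 =2\bigl(\inner{u}{T_jv}+\overline{\inner{u}{T_j^*v}}\bigr),
\]
since $\inner{v}{T_ju}=\inner{T_j^*v}{u}=\overline{\inner{u}{T_j^*v}}$. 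So equal data for $x$ and $y$ is exactly the vanishing of every cross term \eqref{eq:cross-term} for the pair $(u,v)$.

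Next, I will translate the condition $xx^*=yy^*$ into a condition on $u,v$. We have $xx^*=yy^*$ iff $y=e^{\mathrm i\theta}x$ for some $\theta$, or $x=y=0$.
\begin{itemize}
\item If $u=0$ or $v=0$, then $y=\pm x$, so $xx^*=yy^*$ automatically.
\item If both are nonzero, I claim $xx^*=yy^*$ iff $u\in\mathrm i\R v$. Suppose $u+v=e^{\mathrm i\theta}(u-v)$. If $e^{\mathrm i\theta}=1$ then $v=0$, which is excluded. Otherwise $u(1-e^{\mathrm i\theta})=-v(1+e^{\mathrm i\theta})$, so $u=\frac{e^{\mathrm i\theta}+1}{e^{\mathrm i\theta}-1}v=-\mathrm i\cot(\theta/2)\,v\in\mathrm i\R v$. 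Conversely, $u=\mathrm i t v$ gives $\norm{u+v}=\norm{u-v}$, and $u+v=\frac{\mathrm it+1}{\mathrm it-1}(u-v)$ with a unimodular ratio.
\end{itemize}
Combining the two steps, failure of phase retrieval is equivalent to the existence of nonzero $u,v$ with $u\notin\mathrm i\R v$ for which all cross terms vanish. This proves the equivalence in Lemma~\ref{lem:cross-term}. The frame bounds play no role.

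For the real self-adjoint reduction, work on a real space with $T_j=T_j^*$. The cross term becomes $2\inner{u}{T_jv}$, and the exclusion $u\in\mathrm i\R v$ reduces to $u=0$. The condition then reads: for every $v\ne0$, no nonzero $u$ is orthogonal to all $T_jv$. That is exactly $\spanop\{T_jv\}=\mathcal H$. This is where I must be careful to restate the real version of ``determine $xx^*$'' as $y=\pm x$. The polarization then gives $u=0$ or $v=0$, which matches.

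I expect the only delicate point to be the phase-relation step, namely the exact correspondence between $y\in\mathbb T x$ and $u\in\mathrm i\R v$, including the degenerate cases. The rest is direct expansion.
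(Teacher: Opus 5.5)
Your proposal is correct and follows the paper's own argument. You polarize with $x=u+v$, $y=u-v$, expand the difference of the quadratic data into twice the cross term, and identify the degenerate pairs ($u=0$, $v=0$, or $u\in\mathrm i\R v$) as exactly those with $xx^*=yy^*$. Your explicit computation $u=-\mathrm i\cot(\theta/2)\,v$ and the real-case check via $y=\pm x$ fill in the step the paper only asserts.
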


\begin{proof}[Proof sketch]
For $a=u+v$ and $b=u-v$, direct expansion in the convention that the inner product is linear in its first variable gives
\begin{equation*}
 \inner{a}{T_ja}-\inner{b}{T_jb}
 =2\left(\inner{u}{T_jv}
 +\overline{\inner{u}{T_j^*v}}\right).
\end{equation*}
Moreover, $aa^*\ne bb^*$ precisely in the nondegenerate case represented by nonzero $u,v$ with $u\notin\mathrm i\R v$.  This proves the criterion. If the space is real and every $T_j$ is self-adjoint, the displayed cross term is $2\inner{u}{T_jv}$; its nonvanishing for every nonzero $u$ is equivalent to $\spanop\{T_jv:j\}=\mathcal H$.
\end{proof}

The criterion isolates the missing cross term: phase retrieval fails exactly when all measurements erase the same coherence.

\begin{example}[The tetrahedral qubit POVM]\label{ex:tetrahedral-povm}
Let $s_1,\ldots,s_4\in\R^3$ be the vertices of a regular tetrahedron on the unit sphere.  Thus $\sum_js_j=0$ and $\sum_js_js_j^t=\frac43I_3$.  With $\boldsymbol\sigma=(\sigma_x,\sigma_y,\sigma_z)$, define
\begin{equation*}
 F_j=\frac14(I+s_j\cdot\boldsymbol\sigma).
\end{equation*}
Each $F_j$ is positive and rank one, and $\sum_jF_j=I$.  If $\rho=\frac12(I+r\cdot\boldsymbol\sigma)$, then
\begin{equation}
 p_j=\Tr(\rho F_j)=\frac14(1+s_j\cdot r),
 \qquad r=3\sum_{j=1}^4p_js_j.
 \label{eq:tetrahedral-reconstruction}
\end{equation}
Hence the probabilities determine every qubit state, not only the pure ones.  This symmetric informationally complete POVM is the normalized four vector phase-retrieval model in quantum form\cite{RenesEtAl2004}.
\end{example}

\subsection{Quantum channels and pulled-back measurements}

A quantum channel $\Phi:\B(\mathcal H_A)\to\B(\mathcal H_B)$ is a completely positive and trace preserving map.  It has a Kraus representation
\begin{equation}
 \Phi(X)=\sum_{i=1}^rA_iXA_i^*,
 \qquad \sum_{i=1}^rA_i^*A_i=I_{\mathcal H_A},
 \label{eq:kraus-form}
\end{equation}
where the smallest possible $r$ is the Choi rank $\Cr(\Phi)$ \cite{Watrous2018,Choi1975,Kraus1971}.  The adjoint $\Phi^*(Y)=\sum_iA_i^*YA_i$ is unital.  Consequently an output POVM $\{F_j\}$ pulls back to the input POVM $\{\Phi^*(F_j)\}$, and setting
\begin{equation*}
 B_j=\Phi^*(F_j)^{1/2}
\end{equation*}
gives a Parseval operator-valued frame since $\sum_jB_j^*B_j=I$.  This verifies the operator-valued-frame structure behind \eqref{eq:information-flow}.

The following terminology combines \cite[Definition~1.1]{LiuHan2025} and \cite[Definition~2.1]{HanLiu2024}.

\begin{definition}
The channel $\Phi$ is \emph{phase retrievable} on a subspace $\mathcal M\subseteq\mathcal H_A$ if some output POVM $\{F_j\}$ has the property that $\{P_{\mathcal M}\Phi^*(F_j)P_{\mathcal M}\}$ distinguishes all pure states supported in $\mathcal M$.  The phase-retrievability index is
\begin{equation}
 \pr(\Phi)=\max\{\dim\mathcal M:\Phi\text{ is phase retrievable on }
 \mathcal M\}.
 \label{eq:pr-index}
\end{equation}
\end{definition}

The following theorem shows that phase retrievability of a channel on $\mathcal M$ is equivalent to its injectivity on the pure states supported in $\mathcal M$. It follows directly from \cite[Proposition~1.4]{LiuHan2025} and \cite[Lemma~5.1]{HanLiu2024}.

\begin{theorem}[Phase retrieval and pure-state injectivity]\label{thm:channel-injectivity}
A channel $\Phi$ is phase retrievable on $\mathcal M$ if and only if
\begin{equation}
 \Phi(xx^*)=\Phi(yy^*),\quad x,y\in\mathcal M
 \quad\Longrightarrow\quad xx^*=yy^*.
 \label{eq:pure-state-injectivity}
\end{equation}
\end{theorem}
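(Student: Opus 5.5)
The proof runs through the duality $\Tr(\Phi(X)F)=\Tr(X\Phi^*(F))$ and the choice of an informationally complete output POVM. For $x,y\in\mathcal M$ we have $P_{\mathcal M}x=x$, so
\begin{equation*}
 \Tr\bigl(xx^*P_{\mathcal M}\Phi^*(F_j)P_{\mathcal M}\bigr)=\Tr\bigl(xx^*\Phi^*(F_j)\bigr)=\Tr\bigl(\Phi(xx^*)F_j\bigr).
\end{equation*}
Thus the compressed pulled-back measurement on pure states in $\mathcal M$ records exactly the output statistics of $\Phi(xx^*)$ under $\{F_j\}$. Both implications reduce to this identity.

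For the forward direction, suppose some output POVM $\{F_j\}$ makes $\{P_{\mathcal M}\Phi^*(F_j)P_{\mathcal M}\}$ distinguish all pure states supported in $\mathcal M$. Let $x,y\in\mathcal M$ be unit vectors with $\Phi(xx^*)=\Phi(yy^*)$. Then $\Tr(\Phi(xx^*)F_j)=\Tr(\Phi(yy^*)F_j)$ for every $j$. By the identity above, $x$ and $y$ produce the same compressed data, and the distinguishing hypothesis gives $xx^*=yy^*$.

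If \eqref{eq:pure-state-injectivity} is read for arbitrary, not necessarily unit, vectors, one extra step is needed. Because $\Phi$ is trace preserving, $\Phi(xx^*)=\Phi(yy^*)$ forces $\|x\|=\|y\|$. If this common norm is zero, then $x=y=0$. Otherwise, after rescaling by the common norm, we are back in the unit case. The same rescaling handles the converse direction.

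For the converse, assume \eqref{eq:pure-state-injectivity}. Choose an output POVM $\{F_j\}$ on $\mathcal H_B$ that is informationally complete, meaning $\spanop_{\R}\{F_j\}=\Herm(\mathcal H_B)$. Such a POVM exists: take any spanning family of positive operators, for example the $e_ke_k^*$ together with the $(e_k+e_l)(e_k+e_l)^*$ and the $(e_k+\mathrm i e_l)(e_k+\mathrm i e_l)^*$. Since their sum $S$ is positive definite, normalize them as $S^{-1/2}(\cdot)S^{-1/2}$, which is an invertible congruence and therefore preserves spanning. By \eqref{eq:full-versus-pure} this POVM distinguishes all output operators in $\Herm(\mathcal H_B)$. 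Now suppose pure states $xx^*,yy^*$ supported in $\mathcal M$ give the same compressed data. The identity above shows $\Tr\bigl((\Phi(xx^*)-\Phi(yy^*))F_j\bigr)=0$ for all $j$. Since the $F_j$ span $\Herm(\mathcal H_B)$ and the difference is self-adjoint, $\Phi(xx^*)=\Phi(yy^*)$, and injectivity gives $xx^*=yy^*$.

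The only step with real content is the existence of the informationally complete POVM, and it is routine: it uses nothing beyond the span-preserving normalization already remarked after Theorem~\ref{thm:lifted-kernel}. The compression by $P_{\mathcal M}$ plays no role for vectors already in $\mathcal M$.
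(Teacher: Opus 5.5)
Your proposal is correct and follows the paper's route. The forward direction uses the duality identity $\Tr(xx^*P_{\mathcal M}\Phi^*(F_j)P_{\mathcal M})=\Tr(\Phi(xx^*)F_j)$, and the converse uses an informationally complete output POVM; your explicit construction of that POVM and your trace-preservation rescaling remark fill in details the paper's sketch leaves to the literature.
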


\begin{proof}[Proof sketch]
If a pulled-back POVM separates pure inputs, equality of the two outputs gives equality of every pulled-back probability and hence of the inputs. Conversely, choose an informationally complete POVM on $\mathcal H_B$ \cite{HeinosaariMazzarellaWolf2013}. Equality of all its probabilities implies equality of the output density operators, after which \eqref{eq:pure-state-injectivity} applies.
\end{proof}

The output measurement in the converse is chosen to be sufficiently informative.  Pure-state injectivity does not imply that every fixed output POVM identifies the input; a coarse POVM may discard distinctions that remain present in the output operator.

The Choi-rank-one observation in \cite[Section~2]{LiuHan2025} is the basic positive case of the criterion.

Three intrinsic spaces help prevent another common conflation.  For a minimal Kraus family in \eqref{eq:kraus-form}, set
\begin{equation}
 \mathcal O_\Phi=\ran\Phi^*,
 \qquad
 \mathcal K_\Phi=\spanop\{A_i\},
 \qquad
 \mathcal S_\Phi=\spanop\{A_i^*A_j:i,j\}.
 \label{eq:three-spaces}
\end{equation}
Here the spans of Kraus operators are complex linear spans.  The Hermitian part of the observable range $\mathcal O_\Phi$ controls what the receiver can measure at the input.  Under the standard vectorization $A\mapsto|A\rangle\!\rangle$, the Kraus span $\mathcal K_\Phi$ corresponds to the support of the Choi matrix and carries the distinct quadratic-measurement problem studied for square Kraus operators in \cite{HanJuste2019}.  The noise operator system $\mathcal S_\Phi$ controls zero-error confusability and quantum correction \cite{DuanSeveriniWinter2013,KnillLaflamme1997}.  The isometric freedom in minimal Kraus representations shows that all three spaces are representation independent \cite{Watrous2018}, but they need not coincide. Table~\ref{tab:three-spaces} summarizes their distinct roles.

\begin{table}[ht]
\centering
\small
\renewcommand{\arraystretch}{1.15}
\begin{tabular}{
>{\raggedright\arraybackslash}p{0.20\textwidth}
>{\raggedright\arraybackslash}p{0.34\textwidth}
>{\raggedright\arraybackslash}p{0.34\textwidth}}
\toprule
\textbf{Space}&\textbf{Frame/operator role}&\textbf{Operational task}\\
\midrule
$\mathcal O_\Phi$&pulled-back observables&pure-state and full-state identification\\
$\mathcal K_\Phi$&Choi support; Kraus quadratic measurements&representation-side phase retrieval\\
$\mathcal S_\Phi$&noise/confusability operator system&zero-error discrimination and correction\\
\bottomrule
\end{tabular}
\caption{Three channel spaces that should not be interchanged.}
\label{tab:three-spaces}
\end{table}

Finally, for $W_x:\C^r\to\mathcal H_B$ given by $W_xc=\sum_ic_iA_ix$,
\begin{equation}
 \Phi(xx^*)=W_xW_x^*=\sum_i(A_ix)(A_ix)^*,
 \qquad
 W_x^*W_x=[\inner{A_jx}{A_ix}]_{i,j}.
 \label{eq:kraus-image-frame}
\end{equation}
The output is therefore the frame operator of the output-vector family $\{A_ix\}_{i=1}^r$.  A complementary channel associated with the chosen Kraus representation is
\begin{equation}
 \Phi^c(X)=\bigl[\Tr(A_iXA_j^*)\bigr]_{i,j=1}^r,
 \qquad
 \Phi^c(xx^*)=\bigl[\inner{A_ix}{A_jx}\bigr]_{i,j=1}^r,
 \label{eq:complementary-channel}
\end{equation}
so its pure-state output is the Gram matrix of the same family \cite{Watrous2018}.  In particular,
\begin{equation}
 \Tr\bigl(\Phi(xx^*)^2\bigr)
 =\sum_{i,j=1}^r\abs{\inner{A_ix}{A_jx}}^2,
 \label{eq:output-frame-potential}
\end{equation}
is the frame potential of $\{A_ix\}$ \cite{BenedettoFickus2003}.  Thus output rank, output purity, and overlaps between two outputs are respectively span, frame potential, and cross-Gram data of the corresponding output vector families.  This is a second, concrete reason that frame theory is natural for channel analysis.

%%%%%%%%%%%%%%%%%%%%%%%%%%%%%%%%%%%%%%%%%%%%%%%%%%%%%%%%%%%%%%%%%%%%%%%%%%%%%%%%%%%%%%%%%%%%%%%%%%%%%%%%%%%%%%%%%
%%%%%%%%%%%%%%%%%%%%%%%%%%%%%%%%%%%%%%%%%%%%%%%%%%%%%%%%%%%%%%%%%%%%%%%%%%%%%%%%%%%%%%%%%%%%%%%%%%%%%%%%%%%%%%%%%
%%%%%%%%%%%%%%%%%%%%%%%%%%%%%%%%%%%%%%%%%%%%%%%%%%%%%%%%%%%%%%%%%%%%%%%%%%%%%%%%%%%%%%%%%%%%%%%%%%%%%%%%%%%%%%%%%
%%%%%%%%%%%%%%%%%%%%%%%%%%%%%%%%%%%%%%%%%%%%%%%%%%%%%%%%%%%%%%%%%%%%%%%%%%%%%%%%%%%%%%%%%%%%%%%%%%%%%%%%%%%%%%%%%
%%%%%%%%%%%%%%%%%%%%%%%%%%%%%%%%%%%%%%%%%%%%%%%%%%%%%%%%%%%%%%%%%%%%%%%%%%%%%%%%%%%%%%%%%%%%%%%%%%%%%%%%%%%%%%%%%

\section{Pure-state criteria and frame-guided channel design}
\label{sec:criteria-design}

\subsection{Secants and stable separation}

Theorem~\ref{thm:channel-injectivity} turns phase retrievability into an intrinsic property of the channel.  The qualitative criteria below combine \cite[Proposition~1.4 and Lemma~2.7]{LiuHan2025}; the stability clause is the standard compact normalized secant consequence.  This gives a field independent test and a modest stability statement without introducing a separate recovery theory.

\begin{theorem}[Secant criterion and stability]
\label{thm:channel-secant-stability}
Let $\mathcal M\subseteq\mathcal H_A$ have dimension at least two and define
\begin{equation}
 \eta_2(\Phi;\mathcal M)=
 \min_{\substack{X\in\Sec(\mathcal M)\\\norm{X}_2=1}}
 \norm{\Phi(X)}_2.
 \label{eq:stability-modulus}
\end{equation}
For $\mathcal M=\mathcal H_A$ we abbreviate $\eta_2(\Phi;\mathcal H_A)$ to $\eta_2(\Phi)$. The following are equivalent:
\begin{enumerate}[(i)]
\item $\Phi$ is phase retrievable on $\mathcal M$;
\item $\ker\Phi\cap\Sec(\mathcal M)=\{0\}$;
\item $\bigl(\Phi^*(\Herm(\mathcal H_B))\bigr)^\perp \cap\Sec(\mathcal M)=\{0\}$, where the orthogonal complement is taken in the real Hilbert space $\Herm(\mathcal H_A)$;
\item $\eta_2(\Phi;\mathcal M)>0$.
\end{enumerate}
When these conditions hold, all unit $x,y\in\mathcal M$ satisfy
\begin{equation}
 \eta_2(\Phi;\mathcal M)\norm{xx^*-yy^*}_2
 \le \norm{\Phi(xx^*)-\Phi(yy^*)}_2
 \le \norm{\Phi}_{2\to2}\norm{xx^*-yy^*}_2.
 \label{eq:channel-stability}
\end{equation}
\end{theorem}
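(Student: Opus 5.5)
The plan is to pass through Theorem~\ref{thm:channel-injectivity}, which reduces (i) to pure-state injectivity \eqref{eq:pure-state-injectivity}. I would then prove (i)$\Leftrightarrow$(ii) by identifying the secant cone $\Sec(\mathcal M)$ with the set of differences of rank-one projections.

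\textbf{The key lemma.} The first step is to show that every $X\in\Sec(\mathcal M)$ is a nonnegative multiple of $xx^*-yy^*$ for some unit vectors $x,y\in\mathcal M$.
\begin{itemize}
\item A traceless Hermitian $X$ of rank at most two has eigenvalues $\lambda,-\lambda$ with $\lambda\ge0$, and the corresponding eigenvectors lie in $\mathcal M$.
\item Conversely, each such difference $xx^*-yy^*$ is Hermitian, traceless, supported in $\mathcal M$, and has rank at most two.
\end{itemize}
Linearity of $\Phi$ then gives (i)$\Leftrightarrow$(ii):
\begin{itemize}
\item If $\Phi(X)=0$ with $0\ne X=\lambda(xx^*-yy^*)$, then $\Phi(xx^*)=\Phi(yy^*)$ while $xx^*\ne yy^*$, so injectivity fails.
\item Conversely, a failure of injectivity produces a nonzero kernel element $xx^*-yy^*\in\Sec(\mathcal M)$.
\end{itemize}

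\textbf{(ii)$\Leftrightarrow$(iii).} This is duality in the real Hilbert space $\Herm$. Since $\Phi$ is Hermiticity preserving, it restricts to a real-linear map $\Herm(\mathcal H_A)\to\Herm(\mathcal H_B)$. The identity $\Tr(\Phi(X)Y)=\Tr(X\Phi^*(Y))$ then shows that $\ker\Phi\cap\Herm(\mathcal H_A)=(\ran\Phi^*|_{\Herm})^\perp$. The one point to note is that a Hermitian $X$ with $\Tr(\Phi(X)Y)=0$ for all Hermitian $Y$ already satisfies $\Phi(X)=0$, because $\Phi(X)$ is itself Hermitian.

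\textbf{(ii)$\Leftrightarrow$(iv) and stability.} This step uses compactness. The set $\{X\in\Sec(\mathcal M):\norm{X}_2=1\}$ is compact: it is bounded, and it is closed because trace is continuous and the rank condition $\rank X\le 2$ is closed (vanishing of $3\times3$ minors). It is nonempty since $\dim\mathcal M\ge2$. The continuous function $X\mapsto\norm{\Phi(X)}_2$ therefore attains its minimum, and this minimum is positive exactly when no unit-norm cone element lies in the kernel. Because $\Sec(\mathcal M)$ is closed under positive scaling, that is equivalent to (ii).

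For the inequality \eqref{eq:channel-stability}:
\begin{itemize}
\item Apply the definition of $\eta_2$ to $X=(xx^*-yy^*)/\norm{xx^*-yy^*}_2$, and multiply through. The case $xx^*=yy^*$ is trivial.
\item The upper bound is just the operator norm of $\Phi$ as a map between Hilbert--Schmidt spaces.
\end{itemize}

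\textbf{Main obstacle.} I expect the only genuine care to be needed in the spectral identification of $\Sec(\mathcal M)$ with scaled pure-state differences. In particular, the eigenvectors must stay in $\mathcal M$: this holds because $X\in\Herm(\mathcal M)$, viewed as an operator on $\mathcal H_A$ vanishing on $\mathcal M^\perp$, has its range inside $\mathcal M$. The closedness needed for the compactness argument is the other small point.
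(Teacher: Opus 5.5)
Your proposal is correct and follows essentially the paper's route: (i)$\Leftrightarrow$(ii) goes through Theorem~\ref{thm:channel-injectivity} and the identification of $\Sec(\mathcal M)$ with nonnegative multiples of pure-state differences; (ii)$\Leftrightarrow$(iii) uses the Hilbert--Schmidt duality $\bigl(\Phi^*(\Herm(\mathcal H_B))\bigr)^\perp=\ker\Phi\cap\Herm(\mathcal H_A)$; and (iv), together with \eqref{eq:channel-stability}, follows from compactness of the normalized secant set. The only difference is that you prove the secant identification directly from the $\pm\lambda$ spectrum forced by tracelessness, rather than citing Theorem~\ref{thm:lifted-kernel}, which makes your argument self-contained.
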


\begin{proof}[Proof sketch]
The first two conditions are equivalent by Theorems~\ref{thm:lifted-kernel} and~\ref{thm:channel-injectivity}; the equivalence with (iii) follows from the Hilbert--Schmidt identity $\bigl(\Phi^*(\Herm(\mathcal H_B))\bigr)^\perp=\ker\Phi\cap\Herm(\mathcal H_A)$.  The normalized secant set in \eqref{eq:stability-modulus} is compact.  Hence the continuous function $X\mapsto\norm{\Phi(X)}_2$ has a positive minimum exactly when the kernel intersection is trivial.  Applying the definition to the normalized difference of two pure states proves the lower estimate, and the induced operator norm gives the upper estimate.  This is the channel version of the standard injectivity stability principle for finite-dimensional phase retrieval \cite{BandeiraEtAl2014,BalanWang2015,GrohsKoppensteinerRathmair2020}. Recent condition-number analysis for intensity measurements develops this quantitative viewpoint further \cite{PengHanHuang2026}.
\end{proof}

\begin{corollary}[Observable-range and post-processing monotonicity]\label{cor:postprocessing-monotonicity} Let
$\Phi:\B(\mathcal H_A)\to\B(\mathcal H_B)$ and $\Psi:\B(\mathcal H_A)\to\B(\mathcal H_C)$ be channels.  If
\begin{equation}
 \Phi^*(\Herm(\mathcal H_B))
 \subseteq \Psi^*(\Herm(\mathcal H_C)),
 \label{eq:observable-range-inclusion}
\end{equation}
then every subspace that is phase retrievable for $\Phi$ is phase retrievable for $\Psi$, and hence $\pr(\Phi)\le\pr(\Psi)$.  In particular, for every channel $\Theta:\B(\mathcal H_B)\to\B(\mathcal H_C)$,
\begin{equation}
 \pr(\Theta\circ\Phi)\le\pr(\Phi).
 \label{eq:postprocessing-monotonicity}
\end{equation}
\end{corollary}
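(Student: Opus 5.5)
The plan is to use characterization (iii) of Theorem~\ref{thm:channel-secant-stability}, which describes phase retrievability on $\mathcal M$ purely through the real subspace $\Phi^*(\Herm(\mathcal H_B))\subseteq\Herm(\mathcal H_A)$. Orthogonal complements in the real Hilbert space $\Herm(\mathcal H_A)$ reverse inclusions. So \eqref{eq:observable-range-inclusion} gives
\begin{equation*}
 \bigl(\Psi^*(\Herm(\mathcal H_C))\bigr)^\perp\subseteq\bigl(\Phi^*(\Herm(\mathcal H_B))\bigr)^\perp .
\end{equation*}
Intersecting both sides with $\Sec(\mathcal M)$ shows that if the right-hand intersection is $\{0\}$, then so is the left. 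Hence every $\mathcal M$ on which $\Phi$ is phase retrievable is one on which $\Psi$ is phase retrievable. Taking the maximum of $\dim\mathcal M$ over this class yields $\pr(\Phi)\le\pr(\Psi)$.

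The case $\dim\mathcal M\le1$ is not covered by Theorem~\ref{thm:channel-secant-stability}, so I treat it separately. A one-dimensional subspace supports only one pure state, so every channel is phase retrievable on it, and the inclusion of classes still holds there.

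For \eqref{eq:postprocessing-monotonicity}, I apply the first part with the roles reversed: $\Psi:=\Phi$ and the channel $\Theta\circ\Phi$ in place of $\Phi$. It then suffices to check the range inclusion
\begin{equation*}
 (\Theta\circ\Phi)^*(\Herm(\mathcal H_C))=\Phi^*\bigl(\Theta^*(\Herm(\mathcal H_C))\bigr)\subseteq\Phi^*(\Herm(\mathcal H_B)).
\end{equation*}
This holds because $(\Theta\circ\Phi)^*=\Phi^*\circ\Theta^*$, and because $\Theta^*$, being the adjoint of a Hermiticity-preserving map, sends $\Herm(\mathcal H_C)$ into $\Herm(\mathcal H_B)$.

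As an independent check, Theorem~\ref{thm:channel-injectivity} gives the same conclusion directly: if $\Theta\Phi(xx^*)=\Theta\Phi(yy^*)$ forces $xx^*=yy^*$, then so does the weaker hypothesis $\Phi(xx^*)=\Phi(yy^*)$.

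No step is a serious obstacle. The only points needing care are that the complement in (iii) is taken in the real space $\Herm(\mathcal H_A)$, and that $\Phi^*$ and $\Theta^*$ map Hermitian operators to Hermitian operators, so all the ranges involved are real subspaces of the correct spaces.
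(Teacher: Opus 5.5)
Your proposal is correct and matches the paper's proof: you reverse the inclusion by taking orthogonal complements in $\Herm(\mathcal H_A)$, apply criterion (iii) of Theorem~\ref{thm:channel-secant-stability}, and obtain post-processing monotonicity from $(\Theta\circ\Phi)^*=\Phi^*\circ\Theta^*$. Your separate treatment of $\dim\mathcal M\le1$ and your direct check via Theorem~\ref{thm:channel-injectivity} are harmless additions that the paper leaves implicit.
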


\begin{proof}[Proof sketch]
Taking orthogonal complements reverses the inclusion in \eqref{eq:observable-range-inclusion}; the secant criterion in Theorem~\ref{thm:channel-secant-stability} then proves the first claim. For the second, $(\Theta\circ\Phi)^*=\Phi^*\circ\Theta^*$, so the observable range of the composite is contained in that of $\Phi$.
\end{proof}

If an informationally complete output POVM $\{F_j\}$ is fixed, its analysis map has a positive lower singular value on $\Herm(\mathcal H_B)$.  Combining that bound with \eqref{eq:channel-stability} shows that robust experimental data require two ingredients: the channel must separate pure-state secants, and the chosen output POVM must read the surviving directions with an adequate measurement bound.  The existence of an identifying POVM alone does not guarantee a well-conditioned implementation.

\begin{example}[Information-preserving and information-erasing extremes]\label{ex:extreme-channels}
If $V^*V=I$ and $\Phi_V(X)=VXV^*$, then $\Phi_V$ is injective on every operator and preserves Hilbert--Schmidt and trace norms.  Thus it is phase retrievable and $\eta_2(\Phi_V)=1$.  At the opposite extremal case, a replacement channel $\mathcal R_\sigma(X)=\Tr(X)\sigma$ maps every pure state to the same density operator and has $\pr(\mathcal R_\sigma)=1$ when $\dim\mathcal H_A>1$.  These are the Choi-rank-one isometric case from \cite{LiuHan2025} and the simplest example that completes loss of state information.
\end{example}

\begin{example}[Depolarization and dephasing]\label{ex:depolarizing-dephasing}
For $0<p\le1$, the depolarizing channel on $M_n$ is
\begin{equation*}
 \Delta_p(X)=pX+(1-p)\Tr(X)\frac{I_n}{n}.
\end{equation*}
Every pure-state secant is traceless, so $\Delta_p(xx^*-yy^*)=p(xx^*-yy^*)$.  Hence $\Delta_p$ is phase retrievable and $\eta_2(\Delta_p)=p$.  In contrast, complete dephasing on a qubit,
\begin{equation}
 \mathcal Z(X)=e_1e_1^*Xe_1e_1^*+e_2e_2^*Xe_2e_2^*,
 \label{eq:qubit-dephasing}
\end{equation}
maps the two pure states associated with $(e_1+e_2)/\sqrt2$ and $(e_1-e_2)/\sqrt2$ to the same diagonal state. It therefore has $\eta_2(\mathcal Z)=0$ and $\pr(\mathcal Z)=1$.  The missing information is exactly the relative phase between the two basis components.
\end{example}

The tetrahedral POVM from Example~\ref{ex:tetrahedral-povm} gives a useful intermediate case.  Upon outcome $j$, prepare the pure state $\rho_j=\frac12(I+s_j\cdot\boldsymbol\sigma)$.  The measure-and-prepare channel
\begin{equation*}
 \Phi_{\rm tet}(\rho)=\sum_{j=1}^4\Tr(\rho F_j)\rho_j
\end{equation*}
satisfies, by \eqref{eq:tetrahedral-reconstruction},
\begin{equation}
 \Phi_{\rm tet}\!\left(\frac12(I+r\cdot\boldsymbol\sigma)\right)
 =\frac12\left(I+\frac13r\cdot\boldsymbol\sigma\right).
 \label{eq:tetrahedral-channel}
\end{equation}
It is entanglement breaking and injective on the entire qubit state space. Thus preservation of enough classical statistics for identification does not require preservation of entanglement\cite{HorodeckiShorRuskai2003}.

\subsection{Choi rank two and linear Kraus-operator combinations}

Suppose
\begin{equation}
 \Phi(X)=A_1XA_1^*+A_2XA_2^*,
 \qquad A_1^*A_1+A_2^*A_2=I.
 \label{eq:choi-rank-two}
\end{equation}
When this representation is minimal, $\Cr(\Phi)=2$.  The following criterion applies to any two-operator Kraus representation, in the minimal case, it is the sharp characterization of Choi-rank-two channels established in \cite[Theorem~2.2]{LiuHan2025}.

\begin{theorem}[Two-Kraus operators criterion]\label{thm:two-kraus-combinations}
The channel in \eqref{eq:choi-rank-two} is phase retrievable if and only if, for every $\lambda\in\C$, at least one of
\begin{equation}
 A_1+\lambda A_2,
 \qquad
 -\overline\lambda A_1+A_2
 \label{eq:two-kraus-combinations}
\end{equation}
is injective.
\end{theorem}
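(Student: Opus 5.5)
The plan is to reduce both directions to one fact: two-operator Kraus families of the same channel differ by a $2\times2$ unitary. Together with Theorem~\ref{thm:channel-injectivity}, this turns equality of outputs into a statement about kernels of unitary recombinations of $A_1,A_2$.

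For the direction that fails to be phase retrievable when the condition fails, suppose that for some $\lambda\in\C$ both operators in \eqref{eq:two-kraus-combinations} are non-injective.
\begin{itemize}
\item Set $s=(1+|\lambda|^2)^{1/2}$, $C_1=(A_1+\lambda A_2)/s$ and $C_2=(-\overline\lambda A_1+A_2)/s$. The coefficient matrix is unitary, so $\Phi(X)=C_1XC_1^*+C_2XC_2^*$; this is the standard unitary freedom of Kraus representations.
\item Pick nonzero $u\in\ker C_2$ and $v\in\ker C_1$, and put $x=u+v$, $y=u-v$.
\item Then $C_1x=C_1v=-C_1y$ and $C_2x=C_2u=C_2y$. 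Hence $C_kxx^*C_k^*=C_kyy^*C_k^*$ for $k=1,2$, and so $\Phi(xx^*)=\Phi(yy^*)$.
\item It remains to show $xx^*\ne yy^*$. If $u,v$ were linearly dependent, $u$ would lie in $\ker C_1\cap\ker C_2=\ker A_1\cap\ker A_2$. This is impossible because $A_1^*A_1+A_2^*A_2=I$. So $u,v$ are independent, hence $u+v$ and $u-v$ are not proportional.
\item After normalizing $x$ and $y$ (their norms agree because $\Phi$ is trace preserving), \eqref{eq:pure-state-injectivity} fails, and $\Phi$ is not phase retrievable.
\end{itemize}

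For the converse, suppose $\Phi$ is not phase retrievable. By Theorem~\ref{thm:channel-injectivity} there are unit vectors $x,y$ with $\Phi(xx^*)=\Phi(yy^*)$ and $xx^*\ne yy^*$.
\begin{itemize}
\item By \eqref{eq:kraus-image-frame}, $W_xW_x^*=W_yW_y^*$, where $W_x,W_y:\C^2\to\mathcal H_B$.
\item The standard polar-decomposition fact says two operators with equal $WW^*$ differ by a unitary on the domain. Hence $W_y=W_xU$ for some unitary $U\in M_2$.
\item Diagonalize $U=VDV^*$ with $D=\diag(e^{\mathrm i\alpha},e^{\mathrm i\beta})$, and form the rotated Kraus pair $(C_1,C_2)=(A_1,A_2)V$, i.e.\ $C_k=\sum_iV_{ik}A_i$.
\item Then $W_yV=W_xVD$, which says $C_ky=e^{\mathrm i\theta_k}C_kx$, that is, $C_k(y-e^{\mathrm i\theta_k}x)=0$ for $k=1,2$.
\item Since $y$ is not a unimodular multiple of $x$, both vectors $y-e^{\mathrm i\theta_k}x$ are nonzero. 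So $C_1$ and $C_2$ are both non-injective.
\end{itemize}

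The remaining step is to match $(C_1,C_2)$ with the pair in \eqref{eq:two-kraus-combinations}.
\begin{itemize}
\item A $2\times2$ unitary $V$ has first column $(a,b)^t$ and second column a unimodular multiple of $(-\overline b,\overline a)^t$. Up to unimodular factors, which do not affect injectivity, $C_1=aA_1+bA_2$ and $C_2=-\overline bA_1+\overline aA_2$.
\item If $a\ne0$, put $\lambda=b/a$. Then $C_1/a=A_1+\lambda A_2$ and $C_2/\overline a=-\overline\lambda A_1+A_2$, and both are non-injective, so the condition fails at this $\lambda$.
\item If $a=0$, then $C_1$ and $C_2$ are nonzero multiples of $A_2$ and $A_1$. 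So $A_1$ and $A_2$ are both non-injective, and the condition fails at $\lambda=0$.
\end{itemize}

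This degenerate case, which corresponds to $\lambda=\infty$ on the Riemann sphere, is the only point needing care. The main conceptual step is the unitary-diagonalization trick, which converts equality of output frame operators into simultaneous kernel conditions. Everything else is routine bookkeeping of unimodular factors.
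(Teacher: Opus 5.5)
Your proof is correct. The forward half uses the paper's cancellation mechanism, and the converse half takes a genuinely different route.

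\textbf{Forward direction.} The paper notes that kernel vectors of the two combinations in \eqref{eq:two-kraus-combinations} kill the cross term $\inner{x}{\Phi^*(F)y}$ for every $F$, equivalently $\Phi(x\rtensor y)=0$, and then appeals to Lemma~\ref{lem:cross-term}. You instead pass to the unitarily rotated pair $(C_1,C_2)$ and exhibit the confusable inputs $u\pm v$ directly, with independence coming from $\ker A_1\cap\ker A_2=\{0\}$. This is the same cancellation made explicit. There is one labeling slip: with $u\in\ker C_2$ and $v\in\ker C_1$, the correct relations are $C_1x=C_1u=C_1y$ and $C_2x=C_2v=-C_2y$. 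Your conclusion $C_kxx^*C_k^*=C_kyy^*C_k^*$ is unaffected.

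\textbf{Converse.} The paper first uses the condition at $\lambda=0$ to make one Kraus operator injective. It then compares the two rank-two decompositions of the common output through a coefficient lemma. This produces a $2\times2$ matrix with distinct eigenvalues satisfying $\alpha_1\overline{\alpha_2}=-1$, whose eigenvectors yield $\lambda$.

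You instead apply the unitary freedom $W_y=W_xU$ for two factorizations of the same frame operator in \eqref{eq:kraus-image-frame}, and then diagonalize $U$. This buys you three things:
\begin{enumerate}[(i)]
\item Orthogonality of the coefficient vectors $(1,\lambda)$ and $(-\overline\lambda,1)$ comes for free from the spectral theorem.
\item No preliminary injectivity step is needed.
\item The point $\lambda=\infty$, i.e.\ $a=0$, is absorbed into the case $\lambda=0$.
\end{enumerate}
The result is a shorter, self-contained argument.

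It also clarifies the higher-rank picture. For $r$ Kraus operators, the same reasoning shows that pure-state injectivity fails exactly when, for some $x,y$ with $xx^*\ne yy^*$, some unitary rotation $(C_k)$ of the Kraus family and unimodular scalars $\omega_k$ satisfy $C_k(y-\omega_kx)=0$ for all $k$. For $r=2$ this reduces to simultaneous non-injectivity of a rotated pair. For $r\ge3$ it additionally requires all these kernel vectors to lie in the single plane $\spanop\{x,y\}$.

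The paper's route, in turn, keeps the argument in the cross-term and relative-spectrum language that Theorem~\ref{thm:relative-spectrum} carries to arbitrary Choi rank.
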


\begin{proof}[Proof sketch]
If both operator combinations have nonzero kernel vectors $x$ and $y$, their two kernel relations cancel the cross term
\begin{equation*}
 \inner{x}{\Phi^*(F)y}
 =\inner{A_1x}{FA_1y}+\inner{A_2x}{FA_2y}
\end{equation*}
for every output observable $F$.  The pointwise spanning consequence of Lemma~\ref{lem:cross-term} then rules out phase retrieval.  Conversely, the condition at $\lambda=0$ makes one Kraus operator injective. If two inequivalent pure inputs had the same output, comparison of the two rank-two decompositions produces a $2\times2$ coefficient matrix whose distinct eigenvalues $\alpha_1,\alpha_2$ satisfy $\alpha_1\overline{\alpha_2}=-1$.  Corresponding eigenvectors then give a value of $\lambda$ for which both combinations in \eqref{eq:two-kraus-combinations} have a kernel, contradiction.  The coefficient lemma and full calculation are given in \cite[Lemma~2.1 and Theorem~2.2]{LiuHan2025}.
\end{proof}

The coefficient vectors $(1,\lambda)$ and $(-\overline\lambda,1)$ are orthogonal in $\C^2$.  Thus the theorem says that two orthogonal combinations of a minimal two-element Kraus family cannot simultaneously lose injectivity.

\begin{example}[Amplitude damping]\label{ex:amplitude-damping}
For $0\le\gamma\le1$, let
\begin{equation*}
 A_1=\begin{pmatrix}1&0\\0&\sqrt{1-\gamma}\end{pmatrix},
 \qquad
 A_2=\sqrt\gamma\begin{pmatrix}0&1\\0&0\end{pmatrix}.
\end{equation*}
The trace-preserving identity is immediate.  If $\gamma<1$, then
\begin{equation*}
 \det(A_1+\lambda A_2)=\sqrt{1-\gamma}\ne0
 \qquad(\lambda\in\C),
\end{equation*}
so Theorem~\ref{thm:two-kraus-combinations} proves phase retrievability. When $\gamma=1$, both basis states are sent to $e_1e_1^*$, the channel then becomes a reset channel and is not phase retrievable.
\end{example}

\subsection{The higher-rank relative-spectrum criterion}

For arbitrary Choi rank, scalar coefficient relations extend to matrix-valued relative spectra.  The cancellation mechanism underlying this condition is independent of the number fields.  We therefore state first its field-independent operator form from \cite[Corollary~2.6]{LiuHan2025}, and then its real and complex phase-retrieval consequences from \cite[Theorems~2.2 and~2.5]{LiuHan2025}.

Let $\mathbf C=(C_1,\ldots,C_p)^t$ and $\mathbf D=(D_1,\ldots,D_q)^t$ be operator tuples from $H$ to $K$.  In finite dimensions, define
\begin{equation}
 \sigma_\ell(\mathbf C,\mathbf D)=
 \left\{\Lambda\in M_{p\times q}(\mathbb F):
 \begin{array}{l}
 \text{there is }0\ne z\in H\text{ such that}\\[-1mm]
 C_i z=\sum_{j=1}^q\lambda_{ij}D_jz,\quad1\le i\le p
 \end{array}\right\},
 \label{eq:relative-spectrum-definition}
\end{equation}
where $\mathbb F=\R$ or $\C$.  Equivalently, the block-column operator $\mathbf C-\Lambda\mathbf D:H\to K^p$ is not injective, and also not left invertible in finite dimensions. For an ordered subset $\Omega\subset\{1,\ldots,r\}$, write $\mathbf A_\Omega=(A_i)_{i\in\Omega}^t$.

\begin{theorem}[Field-independent relative spectrum criterion]\label{thm:relative-spectrum}
Let $\Phi(X)=\sum_{i=1}^rA_iXA_i^*$ be a channel over $\mathbb F\in\{\R,\C\}$ of Choi rank $r$, with minimal Kraus family $A_1,\ldots,A_r$.  The following are equivalent:
\begin{enumerate}[(i)]
\item $\Phi(x\rtensor y)\ne0$ for all nonzero $x,y$;
\item for every ordered nontrivial bipartition
$\Omega\sqcup\Omega^c=\{1,\ldots,r\}$ and every coefficient matrix
$\Lambda\in M_{|\Omega|\times|\Omega^c|}(\mathbb F)$, at least one of the followings holds:
\begin{equation}
 \Lambda\notin\sigma_\ell(\mathbf A_\Omega,\mathbf A_{\Omega^c}),
 \qquad
 -\Lambda^*\notin\sigma_\ell(\mathbf A_{\Omega^c},\mathbf A_\Omega).
 \label{eq:relative-spectrum-condition}
\end{equation}
\end{enumerate}
\end{theorem}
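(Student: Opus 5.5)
The plan is to rewrite $\Phi(x\rtensor y)$ as a product of two synthesis-type operators and reduce both conditions to a statement about complementary subspaces of $\mathbb F^r$. As in \eqref{eq:kraus-image-frame}, for $z\in\mathcal H_A$ put $W_z:\mathbb F^r\to\mathcal H_B$, $W_zc=\sum_ic_iA_iz$. Then
\[
 \Phi(x\rtensor y)=\sum_i(A_ix)(A_iy)^*=W_xW_y^*.
\]
Since $\sum_iA_i^*A_i=I$, we have $W_z\ne0$ whenever $z\ne0$. The key observation is that a vector of the form $c=e_i-\sum_j\mu_{ij}e_j$ lies in $\ker W_z$ exactly when $A_iz=\sum_j\mu_{ij}A_jz$. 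So membership $\Lambda\in\sigma_\ell(\mathbf A_\Omega,\mathbf A_{\Omega^c})$, witnessed by $z$, says that $\ker W_z$ contains the ``graph subspace''
\[
 G_\Lambda=\spanop\Bigl\{e_i-\textstyle\sum_{j\in\Omega^c}\lambda_{ij}e_j:\ i\in\Omega\Bigr\},
\]
which has dimension $|\Omega|$ and sits over the coordinates $\Omega$. A single computation in $\mathbb F^r$ then shows that $G_\Lambda^\perp$ is exactly the graph subspace over $\Omega^c$ attached to the coefficient matrix $-\Lambda^*$:
\[
 G_\Lambda^\perp=\spanop\Bigl\{e_j+\textstyle\sum_{i\in\Omega}\overline{\lambda_{ij}}\,e_i:\ j\in\Omega^c\Bigr\}.
\]
Indeed, these vectors are orthogonal to the generators of $G_\Lambda$, and the dimensions add up to $r$. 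Checking this identity, with the conjugations handled correctly under the convention that inner products are linear in the first slot, is the one place where care is needed. I expect it to be the main (if modest) obstacle, since the sign and the adjoint in $-\Lambda^*$ must come out exactly as in \eqref{eq:relative-spectrum-condition}. Over $\R$ the same computation gives $-\Lambda^t$.

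\emph{(ii)$\Rightarrow$(i), by contraposition.} Suppose $W_xW_y^*=0$ with $x,y\ne0$, and set $N=\ker W_x$. Then $\ran W_y^*\subseteq N$, so $N^\perp\subseteq(\ran W_y^*)^\perp=\ker W_y$. Since $W_x\ne0$ we have $N\ne\mathbb F^r$, and since $W_y\ne0$ we have $\ran W_y^*\neq 0$, so $N\ne0$. Let $m=\dim N$, so $1\le m\le r-1$. Row reduction (choice of pivot coordinates) gives a set $\Omega$ with $|\Omega|=m$ such that coordinate projection onto $\Omega$ is bijective on $N$. This exhibits $N=G_\Lambda$ for a unique $\Lambda\in M_{|\Omega|\times|\Omega^c|}(\mathbb F)$, and $\Omega$ is a nontrivial bipartition. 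The inclusion $N\subseteq\ker W_x$ gives $\Lambda\in\sigma_\ell(\mathbf A_\Omega,\mathbf A_{\Omega^c})$ with witness $x$. The inclusion $G_\Lambda^\perp=N^\perp\subseteq\ker W_y$ gives $-\Lambda^*\in\sigma_\ell(\mathbf A_{\Omega^c},\mathbf A_\Omega)$ with witness $y$. This violates (ii).

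\emph{(i)$\Rightarrow$(ii), by contraposition.} Suppose both memberships hold, witnessed by nonzero $x$ and $y$ respectively. Then $G_\Lambda\subseteq\ker W_x$ and $G_\Lambda^\perp\subseteq\ker W_y$. The second inclusion gives $\ran W_y^*=(\ker W_y)^\perp\subseteq G_\Lambda\subseteq\ker W_x$, hence $\Phi(x\rtensor y)=W_xW_y^*=0$, violating (i).

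The argument is identical over $\R$ and $\C$. Minimality of the Kraus family is not actually needed for the equivalence; it only fixes $r=\Cr(\Phi)$ so that the bipartitions range over the intrinsic number of Kraus operators.
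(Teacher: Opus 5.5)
Your proof is correct, including the conjugation bookkeeping in $G_\Lambda^\perp$. Your remark that minimality of the Kraus family is never used is also right; the paper's argument does not use it either.

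Your route differs from the paper's in where the linear algebra is done. The paper works with the output vectors in $\mathcal H_B$:
\begin{itemize}
\item For (i)$\Rightarrow$(ii), it substitutes the two relations into $\sum_iA_ix\rtensor A_iy$ and observes pairwise cancellation.
\item For (ii)$\Rightarrow$(i), it first uses a dual-family argument to show that $\{A_ix\}$ is linearly dependent. It then chooses $\Omega^c$ indexing a basis of $\spanop\{A_ix\}$, which defines $\Lambda$. Finally, it regroups the vanishing sum as $\sum_{j\in\Omega^c}A_jx\rtensor\bigl(A_jy+\sum_{i\in\Omega}\overline{\lambda_{ij}}A_iy\bigr)$ and reads off the second relation from the linear independence of $\{A_jx\}_{j\in\Omega^c}$.
\end{itemize}

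You instead work in the coefficient space $\mathbb F^r$. The factorization $\Phi(x\rtensor y)=W_xW_y^*$ turns (i) into the intrinsic statement that no nonzero $x,y$ satisfy $(\ker W_x)^\perp\subseteq\ker W_y$. The bipartition and $\Lambda$ then become just a graph chart for the subspace $\ker W_x$. The single identity $G_\Lambda^\perp=G^{\Omega^c}_{-\Lambda^*}$ drives both directions.

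The two proofs produce the same $\Lambda$: your pivot set $\Omega$ for $\ker W_x$ is exactly the complement of the paper's basis index set. The paper's dual-family step is replaced in your version by the trivial observation $\ran W_y^*\neq0$.

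As for what each buys: your version explains structurally why $\Lambda$ is paired with $-\Lambda^*$ (orthogonal complement of a graph) and makes the symmetry between the two relations visible. The paper's version is a more direct computation that stays in the output-vector (frame) language used throughout the survey.
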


\begin{proof}[Proof sketch]
If both objects in \eqref{eq:relative-spectrum-condition} hold, then there are nonzero $x,y$ such that
\begin{equation}
 \mathbf A_\Omega x=\Lambda\mathbf A_{\Omega^c}x,
 \qquad
 \mathbf A_{\Omega^c}y=-\Lambda^*\mathbf A_\Omega y.
 \label{eq:relative-spectrum-relations}
\end{equation}
Substitution into $\sum_iA_ix\rtensor A_iy$ gives pairwise cancellation, so $\Phi(x\rtensor y)=0$.

Conversely, suppose $\Phi(x\rtensor y)=0$ with $x,y\ne0$.  The family $\{A_ix\}$ is linearly dependent, otherwise a dual family applied to $0=\sum_iA_iy\rtensor A_ix$ would imply $A_iy=0$ for every $i$, contrary to $\sum_iA_i^*A_i=I$.  Choose $\Omega^c$ so that $\{A_jx:j\in\Omega^c\}$ is a basis of $\spanop\{A_ix\}$.  Expressing the remaining $A_ix$ in that basis produces $\Lambda$ and the first relation in \eqref{eq:relative-spectrum-relations}.  Linear independence in the identity $\sum_iA_ix\rtensor A_iy=0$ gives the second relation.  Hence both spectral conditions are satisfied.  This is exactly the cancellation argument of \cite[Corollary~2.6]{LiuHan2025}.
\end{proof}

\begin{corollary}[Field-dependent phase-retrieval criterion]\label{cor:relative-spectrum-phase-retrieval}
Let $\Phi$ be the channel defined in Theorem~\ref{thm:relative-spectrum}.
\begin{enumerate}[(i)]
\item If the input and output Hilbert spaces are real, then $\Phi$ is phase retrievable if and only if it satisfies the equivalent conditions in Theorem~\ref{thm:relative-spectrum}.
\item If the input and output Hilbert spaces are complex, phase retrievability implies those conditions.  When $\Cr(\Phi)=2$, the conditions are also sufficient for phase retrievability.
\end{enumerate}
\end{corollary}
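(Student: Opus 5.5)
The plan is to reduce everything to the secant criterion of Theorem~\ref{thm:channel-secant-stability}, i.e.\ $\ker\Phi\cap\Sec(\mathcal H_A)=\{0\}$, and compare pure-state secants with the rank-one operators $x\rtensor y$ of Theorem~\ref{thm:relative-spectrum}(i).

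\textbf{Real case.} Over $\R$, for unit $x,y$ with $xx^*\neq yy^*$, set $u=x+y$ and $v=x-y$. Both are nonzero because $y\neq\pm x$. The polarization identity gives
\[
xx^*-yy^*=\tfrac12(uv^*+vu^*).
\]
Suppose $\Phi(xx^*)=\Phi(yy^*)$. Then $\Phi(uv^*)+\Phi(vu^*)=0$. Since $\Phi$ has real Kraus operators, $\Phi(vu^*)=\Phi(uv^*)^t$, so $\Phi(uv^*)$ is skew-symmetric. This alone does not force it to vanish.

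I will therefore instead use Lemma~\ref{lem:cross-term} in its real form. Pulled back through an informationally complete output family, phase retrievability says that for every $v\neq0$ and every $u\neq0$ there is a self-adjoint output $F$ with $\inner{u}{\Phi^*(F)v}\neq0$. Since $\inner{u}{\Phi^*(F)v}=\Tr(F\,\Phi(vu^*))$ up to ordering conventions, this holds for some symmetric $F$ exactly when the symmetric part of $\Phi(vu^*)$ is nonzero. Here I expect to need the following: condition (i) also rules out a nonzero $\Phi(vu^*)$ that is skew. The idea is to rewrite any skew output as $\Phi$ applied to a symmetric rank-two operator, after rechoosing $u,v$. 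This rechoosing is the delicate step.

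A cleaner route, which I will try first, is to argue that in the real case $\Phi(xx^*-yy^*)=0$ lets us run the cancellation argument of Theorem~\ref{thm:relative-spectrum} directly on the rank-two symmetric kernel element. Take the element $uv^*+vu^*$ and treat the tuples $\{A_iu\}$ and $\{A_iv\}$ jointly. Linear dependence and a basis choice then produce $\Lambda$ together with $-\Lambda^*=-\Lambda^t$, so both relative spectral memberships hold. The converse is easy. If both memberships in \eqref{eq:relative-spectrum-condition} hold, the vectors $x,y$ from \eqref{eq:relative-spectrum-relations} give $\Phi(x\rtensor y)=0$. Hence also $\Phi(y\rtensor x)=\Phi(x\rtensor y)^*=0$, so $\Phi(xy^*+yx^*)=0$. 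This is a nonzero element of $\Sec$ up to normalization: replacing $x$ by $x/\norm x$ and adjusting scale makes the trace vanish when $\inner{x}{y}\ne0$ is handled by rescaling. Theorem~\ref{thm:channel-secant-stability}(ii) then fails.

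\textbf{Complex case, necessity.} If conditions (i) fail, pick $x,y\ne0$ with $\Phi(x\rtensor y)=0$. Then $\Phi$ also kills $y\rtensor x$, so for every $\theta$ it kills the self-adjoint rank-two operator
\[
e^{\mathrm i\theta}xy^*+e^{-\mathrm i\theta}yx^*.
\]
Choose $\theta$ so that its trace $2\operatorname{Re}(e^{\mathrm i\theta}\inner{y}{x})$ vanishes. The result is nonzero because $x,y\neq0$, so it is a nonzero element of $\ker\Phi\cap\Sec(\mathcal H_A)$, contradicting Theorem~\ref{thm:channel-secant-stability}. The same trick, using the freedom to choose $\theta$, also gives the real-case converse, with $\theta\in\{0,\pi\}$ replaced by rescaling. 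For $\Cr(\Phi)=2$, I take $\Omega=\{1\}$ with $\Lambda=\lambda$. Condition \eqref{eq:relative-spectrum-condition} then says exactly that $A_1-\lambda A_2$ or $A_2+\overline\lambda A_1$ is injective. After $\lambda\mapsto-\lambda$ this is Theorem~\ref{thm:two-kraus-combinations}, which gives sufficiency.

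\textbf{Main obstacle.} I expect the main obstacle to be real sufficiency. There I must show that a symmetric trace-zero rank-two kernel element yields some $x\rtensor y$ in the kernel. My first attempt will be to diagonalize the secant as $\alpha\,ww^*-\beta\,zz^*$ with orthonormal $w,z$ and $\alpha,\beta>0$. With $\alpha=\beta$ by trace zero, it equals $\tfrac{\alpha}{2}(uv^*+vu^*)$ for $u=w+z$ and $v=w-z$. I would then apply the linear-independence argument to $\sum_i(A_iu\rtensor A_iv+A_iv\rtensor A_iu)=0$ over $\R$, obtaining the relations \eqref{eq:relative-spectrum-relations} with $\Lambda^*=\Lambda^t$.
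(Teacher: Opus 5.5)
Your complex-case argument is correct and matches the paper. Necessity comes from turning a rank-one kernel element into a self-adjoint secant in $\ker\Phi$. Choi-rank-two sufficiency comes from reading \eqref{eq:relative-spectrum-condition} with $\Omega=\{1\}$, $\Lambda=\lambda$ as Theorem~\ref{thm:two-kraus-combinations} after $\lambda\mapsto-\lambda$. You miss one simplification the paper uses: $\Tr\Phi(x\rtensor y)=\Tr(x\rtensor y)=\inner{x}{y}$, so $\Phi(x\rtensor y)=0$ already forces $x\perp y$. This does three things. It makes the $\theta$-trick unnecessary. It supplies the linear independence you actually need for $e^{\mathrm i\theta}xy^*+e^{-\mathrm i\theta}yx^*\ne0$, since if $y\parallel x$ your choice of $\theta$ gives $0$. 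And it repairs your real-case necessity, where no rescaling can make $\Tr(xy^t+yx^t)=2\inner{x}{y}$ vanish.

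The genuine gap is real sufficiency, and it cannot be closed, because that implication is false. Via Lemma~\ref{lem:cross-term}, you correctly reduced real phase retrievability to $\Phi(uv^t+vu^t)\ne0$ for all nonzero $u,v$. You also correctly noticed that $\Phi(uv^t)$ could be a nonzero skew matrix, and this really happens. On $\R^2$ let $J=\left(\begin{smallmatrix}0&-1\\1&0\end{smallmatrix}\right)$ and $\Phi(X)=\frac12(X+JXJ^t)$, with minimal Kraus family $I/\sqrt2$, $J/\sqrt2$.
\begin{itemize}
\item Since $\det(I-\lambda J)=\det(J-\lambda I)=1+\lambda^2\ne0$ for real $\lambda$, both relative spectra are empty, so the conditions of Theorem~\ref{thm:relative-spectrum} hold. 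Directly: applying $xy^t=-(Jx)(Jy)^t$ to $y$ and using $\inner{Jy}{y}=0$ gives $x=0$.
\item Yet $\{x,Jx\}$ is orthonormal for every unit $x$, so $\Phi(xx^t)=\frac12 I$ for all pure inputs, and $\Phi$ is not phase retrievable.
\item For instance, $\Phi(e_1e_2^t)=\frac12(e_1e_2^t-e_2e_1^t)$ is exactly the skew output you worried about.
\end{itemize}

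Your planned repair runs the dual-family argument on $\sum_i(A_iu\rtensor A_iv+A_iv\rtensor A_iu)=0$. It fails because that argument needs $\sum_iA_iu\rtensor A_iv=0$ itself to produce the second relation in \eqref{eq:relative-spectrum-relations}. The paper's sketch has the same hole: the real cross-term criterion it cites yields only the symmetrized condition. So part~(i) is valid only in the direction ``phase retrievable $\Rightarrow$ conditions''. A correct real equivalence must use $\Phi(xy^t+yx^t)\ne0$ for nonzero $x,y$, which is strictly stronger.
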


\begin{proof}[Proof sketch]
Theorem~\ref{thm:relative-spectrum} converts the spectral condition into $\Phi(x\rtensor y)\ne0$ for all nonzero $x,y$.  Over the real field, this condition is equivalent to phase retrievability by the real cross-term criterion, as formulated in \cite[Theorem~2.5]{LiuHan2025}.  Over the complex field, phase retrievability implies the nonvanishing condition: if $\Phi(x\rtensor y)=0$, trace preservation gives $\inner{x}{y}=0$, and the nonzero self-adjoint cross term $x\rtensor y+y\rtensor x$ lies in the pure-state secant kernel, contrary to Theorem~\ref{thm:channel-secant-stability}.  When the complex Choi rank is two, the spectral condition reduces to the paired linear-combination criterion in Theorem~\ref{thm:two-kraus-combinations}; the converse is \cite[Theorem~2.2 and Remark~2.2]{LiuHan2025}.
\end{proof}

The distinction between the two fields appears when the rank-one kernel condition is translated into phase retrievability.  Over the real field, Corollary~\ref{cor:relative-spectrum-phase-retrieval} is an all-rank equivalence.  Over the complex field it is exact at Choi rank two, whereas at higher Choi rank the condition is generally only necessary.  The following construction, adapted from \cite[Example~2.1]{LiuHan2025}, illustrates the complex higher-rank limitation.

\begin{example}[Complex higher-rank limitation]\label{ex:complex-higher-rank-gap}
In $\C^2$, take $f_1=e_1$, $f_2=e_2$, and $f_3=e_1+e_2$.  Their frame has the complement property but is not phase retrievable, as seen in Example~\ref{ex:three-vectors}.  Put
\begin{equation*}
 R_j=f_jf_j^*,\qquad S=\sum_{j=1}^3R_j^2,\qquad
 B_j=R_jS^{-1/2},
\end{equation*}
and define $\Psi(X)=\sum_{j=1}^3B_jXB_j^*$.  Since $\sum_jB_j^*B_j=I$, this is a channel.  Linear independence of the $R_j$ and the complement property imply that $\Psi(x\rtensor y)\ne0$ for all nonzero $x,y$.  Nevertheless, the two inequivalent vectors with equal frame intensities give, after the invertible change by $S^{1/2}$, two distinct vectors with equal outputs.  Trace preservation makes their norms equal, so division by their common norm gives two normalized pure inputs with the same output.  Hence $\Psi$ satisfies the relative-spectrum condition but is not phase retrievable.  This is the frame-based construction of \cite[Example~2.1]{LiuHan2025}.
\end{example}

The following negative result is complementary: instead of detecting a spectral cancellation, it identifies a structural loss of all cross-summand coherences \cite[Proposition~3.5]{LiuHan2025}.

\begin{proposition}[Direct-sum Kraus-range obstruction]\label{prop:direct-sum-obstruction}
Let $\Phi(X)=\sum_{i=1}^rA_iXA_i^*$ have the minimal Kraus family $A_1,\ldots,A_r$, where $r=\Cr(\Phi)>1$.  If
\begin{equation*}
 \ran A_1^*+\cdots+\ran A_r^*
\end{equation*}
is a direct sum, then $\Phi$ is not phase retrievable.  In particular, a nontrivial pinching channel $X\mapsto\sum_iP_iXP_i$ with mutually orthogonal nonzero projections is not phase retrievable.
\end{proposition}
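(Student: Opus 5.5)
The plan is to exhibit a nonzero element of $\ker\Phi\cap\Sec(\mathcal H_A)$ and then invoke Theorem~\ref{thm:channel-secant-stability}. Following the cancellation idea behind Theorem~\ref{thm:relative-spectrum}, I would look for nonzero $x,y$ with $\Phi(x\rtensor y)=0$. Such vectors are produced directly from the direct-sum hypothesis.

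\textbf{Step 1: the ranges split $\mathcal H_A$ properly.} Put $V_i=\ran A_i^*=(\ker A_i)^\perp$.
\begin{itemize}
\item Each $V_i$ is nonzero, since a minimal Kraus family contains no zero operator.
\item The identity $\sum_iA_i^*A_i=I$ gives $V_1+\cdots+V_r=\mathcal H_A$, because $\ran\sum_iA_i^*A_i\subseteq\sum_iV_i$.
\item Since $r>1$ and the sum is direct, $W:=V_2+\cdots+V_r$ is a proper subspace, as $V_1\neq0$ meets it trivially.
\item Likewise $V_1\neq\mathcal H_A$.
\end{itemize}
Hence we can choose nonzero vectors
\[
x\in W^\perp=\bigcap_{i\ge2}\ker A_i,
\qquad
y\in V_1^\perp=\ker A_1 .
\]

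\textbf{Step 2: pairwise cancellation.} For $i=1$ we have $A_1y=0$, and for $i\ge2$ we have $A_ix=0$. Therefore every summand $A_ix\rtensor A_iy=(A_ix)(A_iy)^*$ vanishes, so
\[
\Phi(x\rtensor y)=0,
\qquad
\Phi(y\rtensor x)=\Phi(x\rtensor y)^*=0 .
\]
Trace preservation then gives $\inner{x}{y}=\Tr(x\rtensor y)=0$.

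\textbf{Step 3: a nonzero Hermitian secant in the kernel.} Consider the two operators
\[
X_+=x\rtensor y+y\rtensor x,
\qquad
X_-=\mathrm i(x\rtensor y-y\rtensor x).
\]
Both are self-adjoint, traceless, of rank at most two, and lie in $\ker\Phi$. They cannot both vanish, since that would force $x\rtensor y=0$, which is impossible for nonzero $x,y$. Thus $\ker\Phi\cap\Sec(\mathcal H_A)\ne\{0\}$, and Theorem~\ref{thm:channel-secant-stability}(ii) shows that $\Phi$ is not phase retrievable.

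\textbf{The pinching case.} Take $A_i=P_i$ with mutually orthogonal nonzero projections and $\sum_iP_i=I$. The $P_i$ are linearly independent, so this Kraus family is minimal and the Choi rank equals the number of projections. A nontrivial pinching has at least two projections, so this number exceeds $1$. The ranges $\ran P_i$ are mutually orthogonal, hence their sum is direct, and the first part applies.

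The only delicate point is Step 1: the proper-subspace claims rely on the minimality of the Kraus family, which guarantees $V_i\ne0$, together with $r>1$. The rest is direct computation.
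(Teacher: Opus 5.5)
Your proof is correct and uses the same cancellation as the paper's. The dual vectors $h_{11}$ and $h_{21}$ in the paper lie in $\bigcap_{i\ne1}\ker A_i$ and $\ker A_1$, respectively, and the paper's identity $\Phi(xx^*)=\Phi(yy^*)$ for $x,y=h_{11}\pm h_{21}$ is exactly $\Phi(h_{11}\rtensor h_{21}+h_{21}\rtensor h_{11})=0$, which is your $X_+$.

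The differences are only in packaging. You get the two vectors as orthogonal complements rather than from a dual basis, which shows that only $\ker A_1\ne0$ and $\bigcap_{i\ge2}\ker A_i\ne0$ are needed (the $\Lambda=0$ case of Theorem~\ref{thm:relative-spectrum}). You also conclude through Theorem~\ref{thm:channel-secant-stability}(ii) instead of exhibiting the two normalized pure inputs directly.
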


\begin{proof}[Proof sketch]
For each $i$, choose a basis $\{g_{ij}\}_{j=1}^{k_i}$ of $\ran A_i^*$ and write $A_i^*=\sum_jg_{ij}\rtensor f_{ij}$.  Directness makes the union of the $g_{ij}$ linearly independent. Extend it to a basis of the input space and let $h_{ij}$ denote the corresponding dual vectors.  Minimality makes two ranges nonzero, say the first two.  Then $x=h_{11}+h_{21}$ and $y=h_{11}-h_{21}$ have different rank-one operators, while direct calculation gives
\begin{equation*}
 \Phi(xx^*)=f_{11}f_{11}^*+f_{21}f_{21}^*=\Phi(yy^*).
\end{equation*}
For orthogonal projections, this is precisely the loss of relative phase, which is already seen in \eqref{eq:qubit-dephasing}.
\end{proof}

There is also a useful positive higher-rank class \cite[Corollary~3.4]{LiuHan2025}.  It shows that the presence of several rank-one Kraus terms need not destroy phase retrievability when their geometry is tied to an invertible positive term.

\begin{proposition}\label{prop:positive-kraus-class}
Consider a channel on $\B(H)$ with positive Kraus operators $A_1,\ldots,A_r$.  If $A_r$ is invertible and $A_1,\ldots,A_{r-1}$ have rank one, then $\Phi$ is phase retrievable.
\end{proposition}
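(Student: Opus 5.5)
The plan is to prove more than asked: $\Phi$ is injective on all of $\Herm(H)$. Then condition (ii) of Theorem~\ref{thm:channel-secant-stability} holds trivially, and phase retrievability follows.

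First I fix notation. Since $A_i\ge0$ has rank one for $i<r$, I write $A_i=a_ia_i^*$ with $a_i\in H$ (absorbing the scale into $a_i$). Then
\begin{equation*}
 A_iXA_i=a_ia_i^*Xa_ia_i^*=c_i(X)\,a_ia_i^*,\qquad c_i(X)=\inner{Xa_i}{a_i}\in\R
\end{equation*}
for Hermitian $X$. So for $X\in\Herm(H)$ we have $\Phi(X)=A_rXA_r+\sum_{i<r}c_i(X)\,a_ia_i^*$.

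Next, suppose $\Phi(X)=0$. Because $A_r$ is invertible, we can solve for $X$:
\begin{equation*}
 X=-\sum_{i<r}c_i\,b_ib_i^*,\qquad b_i=A_r^{-1}a_i,\quad c_i=c_i(X).
\end{equation*}
Substituting this expression back into the definition of $c_j$ gives a finite linear system for the real vector $c=(c_1,\dots,c_{r-1})^t$:
\begin{equation*}
 c_j=\inner{Xa_j}{a_j}=-\sum_{i<r}c_i\,\abs{\inner{b_i}{a_j}}^2,\qquad\text{i.e.}\qquad (I+G)c=0,
\end{equation*}
where $G_{ji}=\abs{\inner{A_r^{-1}a_i}{a_j}}^2$.

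The key step, and the only place where positivity of $A_r$ is really used, is showing that $I+G$ is invertible. Since $A_r>0$, we can write $\inner{A_r^{-1}a_i}{a_j}=\inner{u_i}{u_j}$ with $u_i=A_r^{-1/2}a_i$. Let $K=[\inner{u_i}{u_j}]$, a positive semidefinite Gram matrix. Then $G$ is (up to transposition) the Hadamard product $K\circ\overline K$, which is positive semidefinite by the Schur product theorem. Hence $I+G\ge I$ is positive definite, so $c=0$, and therefore $X=-\sum_i c_ib_ib_i^*=0$. Thus $\ker\Phi\cap\Herm(H)=\{0\}$, so in particular $\ker\Phi\cap\Sec(H)=\{0\}$, and Theorem~\ref{thm:channel-secant-stability} gives phase retrievability.

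I expect the main obstacle to be recognising that the self-consistency system for the coefficients $c_i$ has a positive semidefinite coefficient matrix. I would verify this carefully, checking the index conventions given that the inner product is linear in the first variable. Note that the trace-preserving identity $A_r^2+\sum_i\norm{a_i}^2a_ia_i^*=I$ is not even needed, beyond $\Phi$ being a channel. If the Schur-product route has a gap, a fallback is to pair $X$ with $A_r^{-1}\bigl(\cdot\bigr)A_r^{-1}$ in Hilbert--Schmidt and use positivity directly: compute $\Tr\bigl(XA_r^{-1}\Phi(X)A_r^{-1}\bigr)=\Tr(X^2)+\sum_i c_i^2\norm{u_i}^4$-type expressions, which would force $X=0$.
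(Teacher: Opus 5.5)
Your proof is correct, and it follows a genuinely different route from the paper. The paper argues on the measurement side. Writing $A_j=u_ju_j^*$ (your $a_j$), it takes output effects $F_j=f_jf_j^*$ with $f_j=A_r^{-1}u_j$ and extends $u_1,\ldots,u_{r-1}$ to a phase-retrievable frame $u_1,\ldots,u_N$. It notes that the first $r-1$ pulled-back effects are combinations of the $u_ku_k^*$ with coefficient matrix $M=I_{r-1}+\bigl[\abs{\inner{A_r^{-1/2}u_k}{A_r^{-1/2}u_j}}^2\bigr]$, and inverts $M$ by the Schur product theorem. Subtracting then recovers every frame intensity $\abs{\inner{x}{u_j}}^2$, so an explicit scaled output POVM separates pure inputs. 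Your $I+G$ is exactly this $M$, but you use it in a kernel computation rather than a reconstruction scheme. You also get more: $\ker\Phi\cap\Herm(H)=\{0\}$, so $\Phi$ is injective on all density operators, not only on pure states. The paper's route gives a concrete identifying measurement built from frame data, instead of the abstract informationally complete output POVM used in the converse of Theorem~\ref{thm:channel-injectivity}. Yours is shorter and intrinsic. You can also bypass Theorem~\ref{thm:channel-secant-stability}, which assumes dimension at least two: $\Phi(xx^*)=\Phi(yy^*)$ directly forces $xx^*-yy^*\in\ker\Phi\cap\Herm(H)=\{0\}$.

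Your fallback would not work as written. Pairing with $A_r^{-1}(\cdot)A_r^{-1}$ gives $\Tr(X^2)+\sum_{i<r}c_i\inner{Xb_i}{b_i}$, whose cross terms have no definite sign. Pairing with $X$ itself does work, and it shows the rank-one hypothesis is not needed at all. For positive Kraus operators,
\[
\Tr\bigl(X\Phi(X)\bigr)=\sum_{i=1}^r\norm{A_i^{1/2}XA_i^{1/2}}_2^2\ge\norm{A_r^{1/2}XA_r^{1/2}}_2^2>0
\]
for every nonzero Hermitian $X$ once $A_r$ is invertible. In your notation the sum is $\norm{A_r^{1/2}XA_r^{1/2}}_2^2+\sum_{i<r}c_i^2$.
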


\begin{proof}[Proof sketch]
Write $A_j=u_ju_j^*$ for $j<r$ and put $f_j=A_r^{-1}u_j$.  For $F_j=f_jf_j^*$, the first $r-1$ pulled-back effects are linear combinations of $u_ku_k^*$ whose coefficient matrix is
\begin{equation*}
 M=I_{r-1}+
 \left[\abs{\inner{A_r^{-1/2}u_k}{A_r^{-1/2}u_j}}^2
 \right]_{k,j=1}^{r-1}.
\end{equation*}
The second term is positive semidefinite by the Schur product theorem \cite[Theorem~7.5.3]{HornJohnson2013}, so $M$ is invertible.  Extend $u_1,\ldots,u_{r-1}$ to a phase-retrievable frame $u_1,\ldots,u_N$ for $H$, and for $j\ge r$ put $f_j=A_r^{-1}u_j$ and $F_j=f_jf_j^*$.  Equality of the first $r-1$ pulled-back data lets one invert $M$ and recover the corresponding frame intensities.  Subtracting their known contributions from the remaining data recovers $\abs{\inner{x}{u_j}}^2$ for every $j\ge r$.  The extended frame is phase retrievable, so the output POVM obtained by scaling these positive tests and adding a residual effect distinguishes all pure inputs.
\end{proof}

\subsection{Channels designed from phase-retrievable frames}

The converse design problem begins with rank-one output measurement operators and asks for a channel whose pullbacks form a phase-retrievable family. The following construction is the representative prescribed-Choi-rank result of \cite[Proposition~3.6]{LiuHan2025}, with trace preservation made explicit.

\begin{theorem}[Prescribed Choi rank from rank-one POVMs]\label{thm:prescribed-choi-rank}
Assume $\dim\mathcal H_A=\dim\mathcal H_B$.  Let $F_j=f_jf_j^*$, $1\le j\le N$, be linearly independent rank-one positive operators such that $\{f_j\}_{j=1}^N$ is phase retrievable and
\begin{equation}
 I\notin\spanop\{F_j:j\in\Lambda\}
 \quad\text{for every proper }\Lambda\subsetneq\{1,\ldots,N\}.
 \label{eq:no-proper-identity}
\end{equation}
For every $1\le r\le N$, there is a channel $\Phi_r$ of Choi rank $r$ such that $\{\Phi_r^*(F_j)\}_{j=1}^N$ distinguishes all pure input states.
\end{theorem}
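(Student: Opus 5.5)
The plan is a perturbation argument. I start from a Choi-rank-one channel whose pullbacks are obviously phase retrievable, and then mix in a small amount of a second channel so as to raise the Choi rank to exactly $r$. Injectivity on the compact normalized secant set survives small perturbations, which is what makes this work.

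\textbf{Step 1 (dimension count).} Set $n=\dim\mathcal H_A=\dim\mathcal H_B$. The $F_j$ are linearly independent in $\Herm(\mathcal H_B)$, which has real dimension $n^2$. Hence $N\le n^2$, and so $r\le n^2$. Identify $\mathcal H_A$ with $\mathcal H_B$.

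\textbf{Step 2 (Kraus family of rank $r$).} Let $U_1=I,U_2,\ldots,U_{n^2}$ be the Weyl (clock-and-shift) unitaries. They are orthogonal in the Hilbert--Schmidt inner product, hence linearly independent. For $r=1$ take $\Phi_1=\mathrm{id}$. For $r\ge2$ and $0<t<1$ define
\begin{equation*}
 \Phi_{r,t}(X)=(1-t)X+\frac{t}{r-1}\sum_{i=2}^{r}U_iXU_i^*.
\end{equation*}
This map is trace preserving. Its Kraus operators $\sqrt{1-t}\,I$ and $\sqrt{t/(r-1)}\,U_i$ for $2\le i\le r$ are linearly independent, so the representation is minimal and $\Cr(\Phi_{r,t})=r$.

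\textbf{Step 3 (pullbacks remain phase retrievable for small $t$).} Consider the analysis map $\mathcal A_t(X)=\bigl(\Tr(X\Phi_{r,t}^*(F_j))\bigr)_j=\mathcal A_{\mathcal F}(\Phi_{r,t}(X))$.
\begin{itemize}
\item At $t=0$ this map equals $\mathcal A_{\mathcal F}$. Because $\{f_j\}$ is phase retrievable, Theorem~\ref{thm:lifted-kernel} gives $\ker\mathcal A_{\mathcal F}\cap\RankTwo(\mathcal H_A)=\{0\}$.
\item The set $\{X\in\RankTwo(\mathcal H_A):\norm{X}_2=1\}$ is compact. So $c=\min\norm{\mathcal A_{\mathcal F}(X)}$ over this set is strictly positive.
\item We have $\norm{\mathcal A_t-\mathcal A_{\mathcal F}}\le 2t\norm{\mathcal A_{\mathcal F}}$, since $\norm{\Phi_{r,t}-\mathrm{id}}_{2\to2}\le 2t$.
\end{itemize}
Choosing $t<c/(2\norm{\mathcal A_{\mathcal F}}+1)$ therefore gives $\ker\mathcal A_t\cap\RankTwo=\{0\}$. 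In particular $\ker\mathcal A_t$ contains no nonzero difference $xx^*-yy^*$. By the first formulation in Theorem~\ref{thm:lifted-kernel}, the family $\{\Phi_{r,t}^*(F_j)\}$ then distinguishes all pure inputs.

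\textbf{Main obstacle.} The delicate point is making sure the Choi rank is exactly $r$ and not smaller, that is, that the Kraus family stays minimal. The Weyl basis settles this, and the bound $N\le n^2$ from linear independence is what guarantees that enough independent unitaries exist.

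\textbf{Relation to the paper's construction.} This route does not use hypothesis \eqref{eq:no-proper-identity}. I expect that hypothesis enters the paper's argument because it builds the Kraus operators directly from the $F_j$, for example by grouping $\{1,\ldots,N\}$ into $r$ blocks and using square roots of normalized partial sums. The no-proper-identity condition would then be what guarantees the linear independence, and hence minimality, of those Kraus operators. If a construction tied to the $F_j$ is wanted, I would try that grouping next. I would verify minimality via \eqref{eq:no-proper-identity} and verify phase retrieval through the invertible-relation remark following Theorem~\ref{thm:lifted-kernel}.
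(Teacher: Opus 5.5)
Your perturbation argument is correct. The dimension count $N\le n^2$, minimality via the Hilbert--Schmidt-orthogonal Weyl unitaries, and the compactness margin $c>0$ all check out. It is, however, a genuinely different route from the paper's.

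The paper builds the channel out of the frame itself. For a unitary $U$ and $g_j=U^*f_j$, it uses the Kraus operators $U,F_1U,\ldots,F_{r-1}U$. Hypothesis \eqref{eq:no-proper-identity}, together with linear independence of the $F_j$, is exactly what makes $I,F_1,\ldots,F_{r-1}$ linearly independent, so the Choi rank is $r$. Your guess about the role of that hypothesis is therefore right, though there is no grouping into blocks. The paper then computes the pulled-back data as $\abs{\inner{x}{g_k}}^2+\sum_{j<r}\abs{\inner{f_k}{f_j}}^2\abs{\inner{x}{g_j}}^2$. It inverts the matrix $I+[\abs{\inner{f_k}{f_j}}^2]_{k,j<r}$, which is positive definite by the Schur product theorem, and recovers every intensity $\abs{\inner{x}{g_j}}^2$ by subtraction. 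Finally it restores trace preservation by congruence with $S^{-1/2}$, where $S=\Psi_r^*(I)$.

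What the paper's route buys is an explicit channel with no small parameter, far from the identity: all Kraus operators but one have rank one. It also comes with an explicit linear reconstruction that reduces directly to phase retrieval by $\{g_j\}$.

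What your route buys is brevity, no normalization step, and more generality. You never use \eqref{eq:no-proper-identity}, and you use linear independence only to guarantee $r\le n^2$. So your argument shows that for any phase-retrievable frame, linearly independent or not, and every $1\le r\le n^2$, such a channel exists. In effect, the channels whose pullbacks separate pure states contain a neighbourhood of the identity, and every such neighbourhood contains channels of each Choi rank.

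The price is non-constructiveness. The admissible $t$ depends on the margin $c=\min\norm{\mathcal A_{\mathcal F}(X)}$ over normalized $X\in\RankTwo(\mathcal H_A)$, which is rarely computable. In exchange, you do get the quantitative lower bound $c-2t\norm{\mathcal A_{\mathcal F}}$ for the perturbed data.
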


\begin{proof}
Choose a unitary $U:\mathcal H_A\to\mathcal H_B$, set $g_j=U^*f_j$, and let $K_j=f_j\rtensor g_j=F_jU$.  For $r\ge2$ define the completely positive map
\begin{equation*}
 \Psi_r(X)=UXU^*+\sum_{j=1}^{r-1}K_jXK_j^*,
\end{equation*}
with only the unitary term when $r=1$.  Condition \eqref{eq:no-proper-identity} and linear independence of the $F_j$ imply that $U,K_1,\ldots,K_{r-1}$ are linearly independent, hence $\Cr(\Psi_r)=r$.

For the tests $F_k$, the pulled-back quadratic data are
\begin{equation}
 \inner{x}{\Psi_r^*(F_k)x}
 =|\inner{x}{g_k}|^2+
 \sum_{j=1}^{r-1}|\inner{f_k}{f_j}|^2|\inner{x}{g_j}|^2.
 \label{eq:intensity-mixing}
\end{equation}
For $1\le k<r$, the coefficient matrix is $I+[|\inner{f_k}{f_j}|^2]_{k,j=1}^{r-1}$, which is positive definite by the Schur product theorem \cite[Theorem~7.5.3]{HornJohnson2013}.  The first $r-1$ data therefore recover the corresponding $|\inner{x}{g_j}|^2$, subtraction in the remaining equations recovers every frame intensity.  Since $\{g_j\}$ is phase retrievable, $\{\Psi_r^*(F_j)\}$ separates pure states.

It remains to normalize the map.  Let $S=\Psi_r^*(I)>0$ and set
\begin{equation}
 \Phi_r(X)=\Psi_r(S^{-1/2}XS^{-1/2}).
 \label{eq:channel-normalization}
\end{equation}
Its Kraus operators are the preceding Kraus operators multiplied on the right by $S^{-1/2}$, so their squared products sum to $I$ and their linear independence is preserved.  Moreover, $\Phi_r^*(F_j)=S^{-1/2}\Psi_r^*(F_j)S^{-1/2}$, which still preserves pure-state separation.  Thus $\Phi_r$ is trace preserving, has Choi rank $r$, and has the required pullback measurements.
\end{proof}

The following is the minimal-frame construction from \cite[Lemma~3.1, Proposition~3.6 and Corollary~3.7]{LiuHan2025}, the same proof also gives the stated complex extension.  For $\mathbb F=\R$, ``channel'', ``adjoint'', and ``POVM'' refer to their real matrix analogues, and the standard quantum interpretation is usually refers to the complex case.

\begin{corollary}[Minimum Parseval construction]\label{cor:minimum-parseval}
Over $\mathbb F=\R$ or $\C$, let $\{f_j\}_{j=1}^{d_{\mathbb F}(n)}$ be a minimum length phase-retrievable Parseval frame for $\mathbb F^n$.  For every $1\le r\le d_{\mathbb F}(n)$ there is a channel $\Phi_r$ of Choi rank $r$ such that
\begin{equation*}
 \{\Phi_r^*(f_jf_j^*)\}_{j=1}^{d_{\mathbb F}(n)}
\end{equation*}
is a phase-retrievable POVM.
\end{corollary}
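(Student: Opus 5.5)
The plan is to deduce the corollary from Theorem~\ref{thm:prescribed-choi-rank}, applied with $\mathcal H_A=\mathcal H_B=\mathbb F^n$ and $F_j=f_jf_j^*$, $N=d_{\mathbb F}(n)$. The same proof works verbatim over $\R$: unitaries become orthogonal matrices, adjoints become transposes, and the Schur product theorem holds for real positive semidefinite matrices. The task is therefore to verify the two hypotheses of that theorem, and then to check that the resulting pulled-back family is in fact a POVM.

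\textbf{Step 1: linear independence.} Suppose $\sum_j c_jF_j=0$ with some $c_{j_0}\ne0$. Removing $f_{j_0}$ does not change the real span of the remaining rank-one operators, so the lifted kernel of $\{f_j\}_{j\ne j_0}$ equals that of the full family. By Theorem~\ref{thm:lifted-kernel}, $\{f_j\}_{j\ne j_0}$ is then a phase-retrievable family of $d_{\mathbb F}(n)-1$ vectors. After normalization by the inverse square root of its frame operator, which preserves phase retrieval as remarked after Theorem~\ref{thm:lifted-kernel}, it becomes a frame of length $d_{\mathbb F}(n)-1$. This contradicts minimality. (Phase retrievability forces spanning, so the normalization is legitimate.)

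\textbf{Step 2: condition \eqref{eq:no-proper-identity}.} Suppose $I=\sum_{j\in\Lambda}c_jF_j$ for a proper subset $\Lambda$. Parsevality gives $I=\sum_jF_j$, so subtracting yields a nontrivial linear relation, since the coefficient of any $j\notin\Lambda$ is $1$. This contradicts Step~1. Hence Theorem~\ref{thm:prescribed-choi-rank} produces, for each $1\le r\le d_{\mathbb F}(n)$, a channel $\Phi_r$ of Choi rank $r$ such that $\{\Phi_r^*(F_j)\}$ separates pure states.

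\textbf{Step 3: POVM property.} Each $\Phi_r^*(F_j)$ is positive, because $\Phi_r^*$ is completely positive. Unitality of $\Phi_r^*$, which follows from trace preservation, together with $\sum_jF_j=I$ gives $\sum_j\Phi_r^*(F_j)=I$. Thus the family is a POVM that distinguishes pure states, i.e.\ a phase-retrievable POVM.

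The main obstacle is Step~1. One must be sure that deleting a dependent vector genuinely yields a shorter phase-retrievable frame in the sense of definition \eqref{eq:d-field}. The kernel argument via Theorem~\ref{thm:lifted-kernel} handles this cleanly, including over $\R$, where the lifted space is real symmetric matrices. Everything else is bookkeeping inherited from the proof of Theorem~\ref{thm:prescribed-choi-rank}.
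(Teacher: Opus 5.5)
Your proposal is correct and follows the paper's own proof. Minimality gives linear independence of the $f_jf_j^*$ because deleting a dependent effect leaves the lifted kernel unchanged; Parsevality then yields \eqref{eq:no-proper-identity}; and Theorem~\ref{thm:prescribed-choi-rank} together with unitality of $\Phi_r^*$ gives the POVM. The normalization of the shortened family in Step~1 is harmless but unnecessary, because $d_{\mathbb F}(n)$ in \eqref{eq:d-field} is a minimum over all phase-retrievable frames and a spanning phase-retrievable family already is one.
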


\begin{proof}[Proof sketch]
Minimality implies that the effects $f_jf_j^*$ are linearly independent, otherwise one measurement could be deleted without changing the kernel in Theorem~\ref{thm:lifted-kernel}.  Parsevality gives $\sum_jf_jf_j^*=I$, so linear independence also gives \eqref{eq:no-proper-identity}.  Apply Theorem~\ref{thm:prescribed-choi-rank}, since $\Phi_r^*$ is unital, the pulled-back effects still sum to $I$.
\end{proof}

The preceding constructions encode frame data in the pullback of an output measurement.  A complementary construction records the measurement probabilities directly in linearly independent prepared states.

\begin{proposition}[Measure-and-prepare factorization]\label{prop:measure-prepare-factorization}
Let $\{F_j\}_{j=1}^N$ be a POVM on $\mathcal H_A$, let $\sigma_1,\ldots,\sigma_N\in\D(\mathcal H_B)$ be real linearly independent, and define
\begin{equation}
 \Phi(X)=\sum_{j=1}^N\Tr(F_jX)\sigma_j.
 \label{eq:general-measure-prepare}
\end{equation}
For every subspace $\mathcal M\subseteq\mathcal H_A$, the channel $\Phi$ is phase retrievable on $\mathcal M$ if and only if the compressed POVM $\{P_{\mathcal M}F_jP_{\mathcal M}\}_{j=1}^N$ is pure-state informationally complete on $\mathcal M$.  Moreover, $\Phi$ is injective on all density operators if and only if $\{F_j\}$ is fully informationally complete.
\end{proposition}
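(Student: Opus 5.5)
The plan is to reduce everything to the linear-algebraic fact that, because the $\sigma_j$ are real linearly independent, the output $\Phi(X)$ determines and is determined by the probability vector $(\Tr(F_jX))_{j=1}^N$. Concretely, the map
\begin{equation*}
 \R^N\to\Herm(\mathcal H_B),\qquad c\mapsto\sum_jc_j\sigma_j,
\end{equation*}
is injective by real linear independence. For self-adjoint $X$ we have $\Tr(F_jX)\in\R$. Hence for $X\in\Herm(\mathcal H_A)$,
\begin{equation*}
 \Phi(X)=0\iff \Tr(F_jX)=0\ \text{for all } j.
\end{equation*}
Thus $\ker\Phi\cap\Herm(\mathcal H_A)=\{X:\Tr(F_jX)=0\ \forall j\}$, which is exactly the kernel of the analysis map $\mathcal A(X)=(\Tr(XF_j))_j$ of Theorem~\ref{thm:lifted-kernel}. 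Complete positivity and trace preservation of $\Phi$ are immediate from $F_j\ge0$, $\sum_jF_j=I$ and $\sigma_j\in\D(\mathcal H_B)$, so $\Phi$ is a channel and the earlier results apply.

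For the subspace statement I would use the secant criterion, Theorem~\ref{thm:channel-secant-stability}(ii), which says $\Phi$ is phase retrievable on $\mathcal M$ iff $\ker\Phi\cap\Sec(\mathcal M)=\{0\}$. Here $\Sec(\mathcal M)$ consists of self-adjoint operators supported in $\mathcal M$, so $X=P_{\mathcal M}XP_{\mathcal M}$ and $\Tr(F_jX)=\Tr(P_{\mathcal M}F_jP_{\mathcal M}X)$. By the first paragraph, the condition therefore becomes
\begin{equation*}
 \{X\in\Herm(\mathcal M):\Tr(P_{\mathcal M}F_jP_{\mathcal M}X)=0\ \forall j\}\cap\Sec(\mathcal M)=\{0\},
\end{equation*}
which is the pure-state criterion of \eqref{eq:full-versus-pure}, applied on $\mathcal M$ to the compressed POVM. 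That compressed family is positive and sums to $P_{\mathcal M}$, the identity of $\mathcal M$, so the equivalence between pure-state distinguishability and trivial intersection with $\Sec(\mathcal M)$ is legitimate there.

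A more direct route, avoiding any worry about the normalized-versus-unnormalized secant sets, also works. For unit $x,y\in\mathcal M$,
\begin{equation*}
 \Phi(xx^*)=\Phi(yy^*)\iff \inner{x}{F_jx}=\inner{y}{F_jy}\ \text{for all } j,
\end{equation*}
and these values are exactly the expectations of $P_{\mathcal M}F_jP_{\mathcal M}$. Theorem~\ref{thm:channel-injectivity} then gives the claim. I would present this direct version as the main argument, since it is cleanest.

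For the final statement, injectivity of $\Phi$ on $\D(\mathcal H_A)$ is, by the same equivalence, the statement that the probabilities $(\Tr(F_j\rho))_j$ separate density operators, i.e.\ full informational completeness. If one wants the span formulation of \eqref{eq:full-versus-pure}, the perturbation argument $I/n\pm tX$ given there handles the converse direction.

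The only delicate point is keeping the field straight: linear independence is over $\R$, and this suffices only because the coefficients $\Tr(F_jX)$ are real for self-adjoint $X$. I expect no genuine obstacle beyond checking that the compression preserves the POVM structure on $\mathcal M$.
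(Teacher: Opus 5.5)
Your proposal is correct and follows essentially the paper's own argument. Real linear independence of the $\sigma_j$, together with reality of $\Tr(F_jX)$ for Hermitian $X$, makes $\Phi(X)=\Phi(Y)$ equivalent to equality of the probability vectors $(\Tr(F_jX))_j$; restricting this to pure states in $\mathcal M$ (via Theorem~\ref{thm:channel-injectivity}) and then to density operators gives both assertions, and your secant-criterion variant is a valid but unnecessary detour.
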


\begin{proof}[Proof sketch]
The map in \eqref{eq:general-measure-prepare} is completely positive because its Choi matrix is a sum of positive tensor products, and it is trace preserving because $\Tr\sigma_j=1$ and $\sum_jF_j=I$.  For Hermitian $X$ and $Y$, real linear independence of the $\sigma_j$ gives
\begin{equation*}
 \Phi(X)=\Phi(Y)
 \quad\Longleftrightarrow\quad
 \Tr(F_jX)=\Tr(F_jY)\quad(1\le j\le N).
\end{equation*}
Restricting first to pure states supported in $\mathcal M$, and then to all density operators, proves the two assertions.  Channels of the form \eqref{eq:general-measure-prepare} are entanglement breaking \cite{HorodeckiShorRuskai2003}.
\end{proof}

\begin{example}[Pure-state information without full-state information]\label{ex:pure-not-full}
Let $\{f_j\}_{j=1}^8$ be a phase-retrievable Parseval frame in $\C^3$ and put $F_j=f_jf_j^*$.  Choose eight real linearly independent density operators $\sigma_j\in M_3$ and define
\begin{equation}
 \Phi(X)=\sum_{j=1}^8\Tr(F_jX)\sigma_j.
 \label{eq:qutrit-measure-prepare}
\end{equation}
Proposition~\ref{prop:measure-prepare-factorization} shows that this is an entanglement breaking, phase-retrievable channel.  But the measurement map has only eight real coordinates on the nine-dimensional space $\Herm(\C^3)$, and hence has a nonzero traceless kernel element.  Small positive and negative perturbations of $I/3$ give two mixed states with the same output.  This channel makes the pure-state/full-state distinction in \eqref{eq:full-versus-pure} concrete and reflects the phenomenon in \cite[Example~1.1]{LiuHan2025}.
\end{example}

%%%%%%%%%%%%%%%%%%%%%%%%%%%%%%%%%%%%%%%%%%%%%%%%%%%%%%%%%%%%%%%%%%%%%%%%%%%%%%%%%%%%%%%%%%%%%%%%%%%%%%%%%%%%%%%%%
%%%%%%%%%%%%%%%%%%%%%%%%%%%%%%%%%%%%%%%%%%%%%%%%%%%%%%%%%%%%%%%%%%%%%%%%%%%%%%%%%%%%%%%%%%%%%%%%%%%%%%%%%%%%%%%%%
%%%%%%%%%%%%%%%%%%%%%%%%%%%%%%%%%%%%%%%%%%%%%%%%%%%%%%%%%%%%%%%%%%%%%%%%%%%%%%%%%%%%%%%%%%%%%%%%%%%%%%%%%%%%%%%%%
%%%%%%%%%%%%%%%%%%%%%%%%%%%%%%%%%%%%%%%%%%%%%%%%%%%%%%%%%%%%%%%%%%%%%%%%%%%%%%%%%%%%%%%%%%%%%%%%%%%%%%%%%%%%%%%%%
%%%%%%%%%%%%%%%%%%%%%%%%%%%%%%%%%%%%%%%%%%%%%%%%%%%%%%%%%%%%%%%%%%%%%%%%%%%%%%%%%%%%%%%%%%%%%%%%%%%%%%%%%%%%%%%%%

\section{Identification, perfect discrimination, and correction}
\label{sec:distinction}

Phase retrievability says that distinct pure inputs have distinct output density operators.  It does not say that one use of the channel permits a perfect decision between them.  Two density operators $\rho$ and $\sigma$ are perfectly distinguishable in one shot precisely when their supports are orthogonal, equivalently $\Tr(\rho\sigma)=0$ \cite{Helstrom1976,Watrous2018}.  Nonorthogonal pure states, for example, remain distinct under the identity channel but cannot be perfectly distinguished in one shot.

Trace distance quantifies this gap.  For two output states occurring with equal prior probabilities, the optimal one-shot success probability is \cite{Helstrom1976}
\begin{equation}
 p_{\rm succ}(\rho,\sigma)
 =\frac12+\frac14\norm{\rho-\sigma}_1.
 \label{eq:helstrom-success}
\end{equation}
Injectivity is equivalent, pair by pair, to this number being strictly greater than $1/2$, and perfect discrimination requires it to equal one. Consequently, a positive secant modulus controls how rapidly statistical distinguishability can deteriorate under perturbation, but it does not turn nonorthogonal outputs into a zero-error alphabet.

Following the noncommutative graph formulation in \cite{DuanSeveriniWinter2013} and the notation of \cite{HanLiu2024}, the \emph{independence number} of a channel $\Phi(X)=\sum_iA_iXA_i^*$ is
\begin{equation}
 \alpha(\Phi)=\max\left\{N:
 \begin{array}{l}
 \text{there are orthonormal }x_1,\ldots,x_N\text{ such that}\\[-1mm]
 \supp\Phi(x_rx_r^*)\perp\supp\Phi(x_sx_s^*)\quad(r\ne s)
 \end{array}\right\}.
 \label{eq:independence-number}
\end{equation}
Expanding the output overlap gives
\begin{equation}
 \Tr\bigl(\Phi(xx^*)\Phi(yy^*)\bigr)
 =\sum_{i,j}|\inner{x}{A_i^*A_jy}|^2.
 \label{eq:output-overlap}
\end{equation}
Thus perfect distinguishability is controlled by the noise operator system $\mathcal S_\Phi$ in \eqref{eq:three-spaces}, rather than by $\mathcal O_\Phi$.  Regularization yields the zero-error classical capacity
\begin{equation}
 C_0(\Phi)=\lim_{q\to\infty}\frac1q\log_2\alpha(\Phi^{\otimes q}).
 \label{eq:zero-error-capacity}
\end{equation}
This is the noncommutative graph viewpoint of \cite{DuanSeveriniWinter2013}.

A subspace $C\subseteq\mathcal H_A$ is an \emph{exactly correctable code} if a recovery channel $\mathcal R$ satisfies $\mathcal R\Phi(\rho)=\rho$ for every state supported in $C$.  Writing $P_C$ as its projection, the Knill--Laflamme criterion \cite{KnillLaflamme1997} is
\begin{equation}
 P_CA_i^*A_jP_C=\lambda_{ij}P_C
 \quad(i,j),
 \qquad [\lambda_{ij}]\ge0.
 \label{eq:knill-laflamme}
\end{equation}
Let $\beta(\Phi)=\max\{\dim C:C\text{ is exactly correctable for }\Phi\}.$ Every orthonormal basis of a correctable code gives a zero-error alphabet, so $\beta(\Phi)\le\alpha(\Phi)$ \cite{HanLiu2024}.

The next implication is standard, compare \cite[Lemmas~3.1 and~5.2]{HanLiu2024} for its use with twirling channels.

\begin{theorem}[Correctability implies phase retrievability]\label{thm:correctability-bridge}
If $C$ is exactly correctable for $\Phi$, then
\begin{equation}
 \norm{\Phi(X)}_1=\norm{X}_1
 \qquad\text{for every Hermitian }X=P_CXP_C.
 \label{eq:trace-isometry-on-code}
\end{equation}
Consequently $C$ is phase retrievable for $\Phi$ and
\begin{equation}
 \beta(\Phi)\le\min\{\alpha(\Phi),\pr(\Phi)\}.
 \label{eq:beta-bounds}
\end{equation}
\end{theorem}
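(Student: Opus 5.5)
The proof sandwiches the trace norm between contractivity of $\Phi$ and contractivity of the recovery channel $\mathcal R$. Quantum channels are trace-norm contractive on Hermitian operators: for Hermitian $X$, write the Jordan decomposition $X=X_+-X_-$ with $X_\pm\ge0$ and $\norm{X}_1=\Tr X_++\Tr X_-$. Positivity and trace preservation then give
\begin{equation*}
 \norm{\Phi(X)}_1\le\Tr\Phi(X_+)+\Tr\Phi(X_-)=\norm{X}_1,
\end{equation*}
and the same bound holds for $\mathcal R$.

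Now let $X=P_CXP_C$ be Hermitian. Its positive and negative parts are supported in $C$, so each is a nonnegative multiple of a state supported in $C$. The recovery identity $\mathcal R\Phi(\rho)=\rho$ extends by real linearity to $\mathcal R\Phi(X)=X$. Hence
\begin{equation*}
 \norm{X}_1=\norm{\mathcal R(\Phi(X))}_1\le\norm{\Phi(X)}_1\le\norm{X}_1,
\end{equation*}
which proves \eqref{eq:trace-isometry-on-code}. Alternatively, one could derive the isometry from the Knill--Laflamme conditions \eqref{eq:knill-laflamme} via a Stinespring factorization, but the recovery-map argument is shorter and needs no Kraus computation.

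For phase retrievability, take unit vectors $x,y\in C$ with $\Phi(xx^*)=\Phi(yy^*)$. The operator $X=xx^*-yy^*$ is Hermitian and satisfies $X=P_CXP_C$, so \eqref{eq:trace-isometry-on-code} gives $\norm{X}_1=\norm{\Phi(X)}_1=0$, that is, $xx^*=yy^*$. By Theorem~\ref{thm:channel-injectivity}, $\Phi$ is phase retrievable on $C$. Equivalently, one could observe that $\ker\Phi\cap\Sec(C)=\{0\}$ and invoke Theorem~\ref{thm:channel-secant-stability}. Taking the maximum over correctable codes yields $\beta(\Phi)\le\pr(\Phi)$.

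For the bound $\beta(\Phi)\le\alpha(\Phi)$, take an orthonormal basis $x_1,\ldots,x_N$ of a correctable code $C$. For $r\ne s$, set $X=x_rx_r^*-x_sx_s^*$, which has $\norm{X}_1=2$. By the isometry, $\norm{\Phi(x_rx_r^*)-\Phi(x_sx_s^*)}_1=2$. For two density operators, trace distance $2$ forces orthogonal supports; this follows, for example, from the Helstrom formula \eqref{eq:helstrom-success} giving success probability one, or from the Fuchs--van de Graaf inequality. The $x_r$ therefore form a zero-error alphabet, so $N\le\alpha(\Phi)$.

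The only delicate point is this last step, namely that $\norm{\rho-\sigma}_1=2$ forces $\rho\sigma=0$. To prove it directly, let $P$ be the projection onto the positive part of $\rho-\sigma$. Then $\Tr P\rho-\Tr P\sigma=1$, which forces $\Tr P\rho=1$ and $\Tr P\sigma=0$. Hence $\rho$ is supported in $\ran P$ and $\sigma$ in $\ran P^\perp$, so their supports are orthogonal. Everything else is formal.
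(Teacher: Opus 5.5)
Your proof is correct. For the trace-norm isometry and for $\beta(\Phi)\le\pr(\Phi)$ it is the paper's argument: extend $\mathcal R\Phi(\rho)=\rho$ by real linearity to Hermitian operators on the code, then sandwich $\norm{\Phi(X)}_1$ between the contractivity of $\Phi$ and the contractivity of $\mathcal R$.

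The one real difference is how you get $\beta(\Phi)\le\alpha(\Phi)$. The paper takes it from the Knill--Laflamme relations \eqref{eq:knill-laflamme}. For orthonormal $x_r,x_s\in C$ these give $\inner{x_r}{A_i^*A_jx_s}=\lambda_{ij}\inner{x_r}{x_s}=0$, so the output overlap \eqref{eq:output-overlap} vanishes. You instead get output orthogonality from the isometry itself, together with the lemma that trace distance $2$ between density operators forces orthogonal supports.

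The Kraus route is a one-line computation once the overlap formula is available. Yours needs that extra lemma, which you prove correctly via the positive-part projection. In exchange, every conclusion of the theorem follows from the mere existence of a recovery channel, with no use of Knill--Laflamme or a Kraus representation.
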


\begin{proof}[Proof sketch]
Exact recovery extends by real linearity from states to every Hermitian operator in $P_C\B(\mathcal H_A)P_C$.  Trace-norm contractivity of channels on Hermitian operators \cite[Chapter~3]{Watrous2018} gives
\begin{equation*}
 \norm{X}_1=\norm{\mathcal R\Phi(X)}_1
 \le\norm{\Phi(X)}_1\le\norm{X}_1,
\end{equation*}
which proves \eqref{eq:trace-isometry-on-code}.  Equality of two pure-state outputs in $C$ then forces equality of their inputs.  Combining this with the zero-error consequence of \eqref{eq:knill-laflamme} gives \eqref{eq:beta-bounds}.
\end{proof}

\begin{corollary}[Stability on an exactly correctable code]\label{cor:code-stability}
Let $C$ be exactly correctable for $\Phi:\B(\mathcal H_A)\to\B(\mathcal H_B)$.  Then, for all density operators $\rho,\sigma$ supported in $C$,
\begin{equation}
 \norm{\Phi(\rho)-\Phi(\sigma)}_1=\norm{\rho-\sigma}_1.
 \label{eq:code-trace-distance}
\end{equation}
Thus the optimal success probability is preserved on $C$.  If $\dim C\ge2$ and $d_B=\dim\mathcal H_B$, then
\begin{equation}
 \eta_2(\Phi;C)\ge d_B^{-1/2}.
 \label{eq:code-eta-lower-bound}
\end{equation}
\end{corollary}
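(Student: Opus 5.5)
The identity \eqref{eq:code-trace-distance} follows from Theorem~\ref{thm:correctability-bridge} applied to $X=\rho-\sigma$. Since $\rho$ and $\sigma$ are supported in $C$, we have $\rho=P_C\rho P_C$ and $\sigma=P_C\sigma P_C$. Hence $X$ is Hermitian and satisfies $X=P_CXP_C$, so \eqref{eq:trace-isometry-on-code} gives $\norm{\Phi(X)}_1=\norm{X}_1$, and linearity of $\Phi$ turns this into \eqref{eq:code-trace-distance}. Inserting the equality into the Helstrom formula \eqref{eq:helstrom-success} shows that $p_{\rm succ}(\Phi(\rho),\Phi(\sigma))=p_{\rm succ}(\rho,\sigma)$. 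This is the preservation of the optimal success probability.

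For \eqref{eq:code-eta-lower-bound}, I will convert the trace-norm isometry into a Hilbert--Schmidt lower bound using the standard norm comparisons in finite dimensions. Take $X\in\Sec(C)$ with $\norm{X}_2=1$. Such an $X$ is Hermitian, compressed to $C$, and of rank at most two. Theorem~\ref{thm:correctability-bridge} gives
\begin{equation*}
 \norm{\Phi(X)}_1=\norm{X}_1\ge\norm{X}_2=1,
\end{equation*}
using $\norm{\cdot}_1\ge\norm{\cdot}_2$ for any operator. The output $\Phi(X)$ is a Hermitian operator on $\mathcal H_B$, so it has at most $d_B$ nonzero eigenvalues. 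Cauchy--Schwarz applied to its singular values then gives $\norm{\Phi(X)}_1\le d_B^{1/2}\norm{\Phi(X)}_2$. Combining the two inequalities yields $\norm{\Phi(X)}_2\ge d_B^{-1/2}$. Taking the minimum over the compact normalized secant set in \eqref{eq:stability-modulus} gives $\eta_2(\Phi;C)\ge d_B^{-1/2}$. The hypothesis $\dim C\ge2$ is needed only so that $\eta_2(\Phi;C)$ is defined, since otherwise $\Sec(C)$ has no nonzero element.

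No step here is a real obstacle, because the substantive content is already in Theorem~\ref{thm:correctability-bridge}. The one point to check is that Theorem~\ref{thm:correctability-bridge} applies to every traceless Hermitian operator supported in $C$, and not only to differences of states. The statement of \eqref{eq:trace-isometry-on-code} covers exactly this, since it asserts the isometry for every Hermitian $X=P_CXP_C$.
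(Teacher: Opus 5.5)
Your proposal is correct and matches the paper's proof: you apply the trace-norm isometry \eqref{eq:trace-isometry-on-code} to $\rho-\sigma$ and invoke the Helstrom formula \eqref{eq:helstrom-success}, then chain $\norm{\Phi(X)}_2\ge d_B^{-1/2}\norm{\Phi(X)}_1=d_B^{-1/2}\norm{X}_1\ge d_B^{-1/2}$ and minimize over normalized secants. Your side remarks are accurate and do no harm: that $\rho-\sigma$ is Hermitian with $X=P_CXP_C$, and that $\dim C\ge2$ only ensures the normalized secant set is nonempty.
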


\begin{proof}[Proof sketch]
Apply \eqref{eq:trace-isometry-on-code} to $\rho-\sigma$ and then use \eqref{eq:helstrom-success}.  If $X\in\Sec(C)$ and $\norm{X}_2=1$, then
\begin{equation*}
 \norm{\Phi(X)}_2\ge d_B^{-1/2}\norm{\Phi(X)}_1
 =d_B^{-1/2}\norm{X}_1\ge d_B^{-1/2}.
\end{equation*}
Taking the minimum over normalized secants proves \eqref{eq:code-eta-lower-bound}.
\end{proof}

Table~\ref{tab:three-operational-notions} compares the three notions.

\begin{table}[ht]
\centering
\small
\renewcommand{\arraystretch}{1.15}
\begin{tabular}{
>{\raggedright\arraybackslash}p{0.17\textwidth}
>{\raggedright\arraybackslash}p{0.38\textwidth}
>{\raggedright\arraybackslash}p{0.34\textwidth}}
\toprule
\textbf{Quantity}&\textbf{Question}&\textbf{Controlling condition}\\
\midrule
$\pr(\Phi)$&How large a pure-state family is injectively identified?&$\ker\Phi$ avoids pure-state secants\\
$\alpha(\Phi)$&How many messages are perfectly distinguished in one use?&pairwise $\mathcal S_\Phi$-orthogonality\\
$\beta(\Phi)$&How large a quantum subspace is exactly recovered?&$P_C\mathcal S_\Phi P_C=\C P_C$\\
\bottomrule
\end{tabular}
\caption{Three operational notions and their controlling conditions.}
\label{tab:three-operational-notions}
\end{table}

\begin{example}[Injective and zero-error distinguishability are incomparable]\label{ex:incomparability}
For $0<p<1$, every output of $\Delta_p$ in Example~\ref{ex:depolarizing-dephasing} is positive definite.  Hence $\alpha(\Delta_p)=\beta(\Delta_p)=1$, although $\pr(\Delta_p)=n$.  Thus this single family already separates pure-state identification from exact quantum correction.  Conversely, qubit dephasing preserves the two computational basis states as perfectly distinguishable outputs, so $\alpha(\mathcal Z)=2$, while $\pr(\mathcal Z)=1$.  Thus neither $\alpha$ nor $\pr$ bounds the other. The current valid relation is the code bound $\beta\le\min\{\alpha,\pr\}$ from Theorem~\ref{thm:correctability-bridge}.
\end{example}

%%%%%%%%%%%%%%%%%%%%%%%%%%%%%%%%%%%%%%%%%%%%%%%%%%%%%%%%%%%%%%%%%%%%%%%%%%%%%%%%%%%%%%%%%%%%%%%%%%%%%%%%%%%%%%%%%
%%%%%%%%%%%%%%%%%%%%%%%%%%%%%%%%%%%%%%%%%%%%%%%%%%%%%%%%%%%%%%%%%%%%%%%%%%%%%%%%%%%%%%%%%%%%%%%%%%%%%%%%%%%%%%%%%
%%%%%%%%%%%%%%%%%%%%%%%%%%%%%%%%%%%%%%%%%%%%%%%%%%%%%%%%%%%%%%%%%%%%%%%%%%%%%%%%%%%%%%%%%%%%%%%%%%%%%%%%%%%%%%%%%
%%%%%%%%%%%%%%%%%%%%%%%%%%%%%%%%%%%%%%%%%%%%%%%%%%%%%%%%%%%%%%%%%%%%%%%%%%%%%%%%%%%%%%%%%%%%%%%%%%%%%%%%%%%%%%%%%
%%%%%%%%%%%%%%%%%%%%%%%%%%%%%%%%%%%%%%%%%%%%%%%%%%%%%%%%%%%%%%%%%%%%%%%%%%%%%%%%%%%%%%%%%%%%%%%%%%%%%%%%%%%%%%%%%

\section{Twirling channels and phase-retrievable subspaces}
\label{sec:twirling}

Twirling channels form a structured setting in which the observable range, the noise algebra, and the relevant frame geometry are all computable from a representation.  Let $G$ be a compact group with normalized Haar measure $\mu$, and let $\pi:G\to\mathcal U(H)$ be a finite-dimensional unitary representation.  Its twirling channel is
\begin{equation}
 \Phi_\pi(T)=\int_G\pi(g)T\pi(g)^*\,d\mu(g).
 \label{eq:twirling-channel}
\end{equation}
This is the standard symmetry-averaging channel and the trace-preserving conditional expectation onto the representation commutant \cite{VollbrechtWerner2001,GirardLevick2021}
\begin{equation*}
 \pi(G)'=\{T\in\B(H):T\pi(g)=\pi(g)T\text{ for every }g\in G\}.
\end{equation*}

Choose an irreducible decomposition
\begin{equation}
 H=\bigoplus_{i=1}^d(\C^{m_i}\otimes H_i),
 \qquad
 \pi=\bigoplus_{i=1}^d(I_{m_i}\otimes\pi_i),
 \qquad n_i=\dim H_i,
 \label{eq:irreducible-decomposition}
\end{equation}
where the $\pi_i$ are pairwise inequivalent irreducible representations. Here, $m_i$ is the multiplicity and $n_i$ the irreducible dimension. The tensor $\C^{m_i}\otimes H_i$ is the $i$th \emph{isotypic component}. We use the standard complete reducibility and Schur orthogonality theory for compact groups \cite{Simon1996}.

\subsection{Schur averaging and the commutant}

Let $\mathcal A_\pi$ be the $C^*$-algebra generated by $\pi(G)$.  Then
\begin{equation}
 \mathcal A_\pi=
 \bigoplus_{i=1}^d(I_{m_i}\otimes\B(H_i)),
 \qquad
 \mathcal A_\pi'=
 \bigoplus_{i=1}^d(M_{m_i}(\C)\otimes I_{H_i}).
 \label{eq:twirling-algebras}
\end{equation}
The twirl is the trace-preserving conditional expectation, equivalently the Hilbert--Schmidt orthogonal projection, onto $\mathcal A_\pi'$. If $P_iTP_i=[T_{ab}^{(i)}]_{a,b=1}^{m_i}$ and $\Tr_{H_i}$ denotes the partial trace over the irreducible factor $H_i$, Schur orthogonality gives the standard block formula \cite[Section~2]{HanLiu2024}
\begin{equation}
 \Phi_\pi(T)=
 \bigoplus_{i=1}^d
 \left[\frac{\Tr(T_{ab}^{(i)})}{n_i}I_{H_i}\right]_{a,b=1}^{m_i}
 =\bigoplus_{i=1}^d
 \left(\frac1{n_i}\Tr_{H_i}(P_iTP_i)\otimes I_{H_i}\right).
 \label{eq:schur-block-formula}
\end{equation}
Thus off-diagonal blocks between inequivalent sectors vanish. Within an isotypic component, the irreducible factor is depolarized, while the full matrix structure on the multiplicity factor survives.

\begin{corollary}[Global phase retrievability of a twirl]\label{cor:global-twirl}
The twirling channel $\Phi_\pi$ is phase retrievable on all of $H$ if and only if $d=1$ and $n_1=1$.  Equivalently, $\pi(g)=\chi(g)I_H$ for a unitary character $\chi$ and $\Phi_\pi$ is the identity channel.
\end{corollary}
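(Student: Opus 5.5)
The plan is to use the block formula \eqref{eq:schur-block-formula} together with the secant criterion in Theorem~\ref{thm:channel-secant-stability}(ii). For each way the representation can fail the stated condition, I will exhibit a nonzero element of $\ker\Phi_\pi\cap\Sec(H)$. The converse direction is immediate. If $d=1$ and $n_1=1$, then $H=\C^{m_1}$, and every $\pi(g)$ is a scalar unitary $\chi(g)I_H$. Then $\pi(g)T\pi(g)^*=T$, so $\Phi_\pi$ is the identity channel. The identity channel is phase retrievable, as in Example~\ref{ex:extreme-channels} with $V=I$. The equivalence with the character formulation is the same observation: $\pi(g)=\chi(g)I_H$ holds exactly when $H$ is a single isotypic component whose irreducible factor is one-dimensional.

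For necessity, I split into two cases according to which hypothesis fails.

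\emph{Case $d\ge2$.} Pick unit vectors $u\in P_1H$ and $v\in P_2H$ lying in different isotypic components. Set $X=uv^*+vu^*$. It is Hermitian, traceless, and has rank two, so $X\in\Sec(H)$. Since $P_iXP_i=0$ for every $i$, formula \eqref{eq:schur-block-formula} gives $\Phi_\pi(X)=0$. Concretely, the vectors $x=(u+v)/\sqrt2$ and $y=(u-v)/\sqrt2$ give $xx^*\ne yy^*$ but have the same output. This is the same relative-phase loss as in Proposition~\ref{prop:direct-sum-obstruction}.

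\emph{Case $d=1$ with $n_1\ge2$.} Write $H=\C^{m_1}\otimes H_1$ and take orthonormal vectors $h,h'\in H_1$. With $e_1\in\C^{m_1}$, let $u=e_1\otimes h$ and $v=e_1\otimes h'$, and set $X=uu^*-vv^*$. This $X$ is a traceless rank-two Hermitian operator. Its multiplicity blocks are $X_{ab}=\delta_{a1}\delta_{b1}(hh^*-h'h'^*)$, and each of these has trace zero. Hence \eqref{eq:schur-block-formula} gives $\Phi_\pi(X)=0$. So the two orthogonal pure states $uu^*$ and $vv^*$ are both sent to $\frac1{n_1}(e_1e_1^*\otimes I_{H_1})$.

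In both cases Theorem~\ref{thm:channel-secant-stability} rules out phase retrievability on $H$, which completes necessity. The main point to get right is the second case. It requires checking that the partial-trace formula kills the difference of two states inside a single irreducible factor. Equivalently, within an isotypic component the irreducible factor is fully depolarized, which is exactly what \eqref{eq:schur-block-formula} says. The rest is bookkeeping.
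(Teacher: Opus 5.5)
Your proposal is correct and follows the paper's proof essentially step for step. For $d\ge2$ you use the same relative-phase pair $(u\pm v)/\sqrt2$ across two inequivalent isotypic components. For $d=1$, $n_1\ge2$ you use the same pair $e_1\otimes h$, $e_1\otimes h'$, both collapsing to $\frac1{n_1}(e_1e_1^*\otimes I_{H_1})$. Sufficiency is the same scalar-character argument. Phrasing the counterexamples as nonzero elements of $\ker\Phi_\pi\cap\Sec(H)$ via Theorem~\ref{thm:channel-secant-stability} is only a cosmetic repackaging of the paper's direct equal-output argument.
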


\begin{proof}
If $d\ge2$, choose unit vectors $u,v$ in two inequivalent isotypic components.  The off-diagonal blocks vanish, so the distinct pure states of $(u+v)/\sqrt2$ and $(u-v)/\sqrt2$ have the same twirled output.  If $d=1$ but $n_1>1$, choose a unit multiplicity vector $a\in\C^{m_1}$ and orthogonal unit vectors $u,v\in H_1$.  Formula \eqref{eq:schur-block-formula} sends both $(a\otimes u)(a\otimes u)^*$ and $(a\otimes v)(a\otimes v)^*$ to $a a^*\otimes I_{H_1}/n_1$.  Thus global phase retrievability fails in both cases.  If $d=1$ and $n_1=1$, the representation acts by a scalar character and conjugation is the identity.
\end{proof}

For a twirl, the three spaces from \eqref{eq:three-spaces} simplify to
\begin{equation}
 \mathcal O_{\Phi_\pi}=\mathcal A_\pi',
 \qquad
 \mathcal K_{\Phi_\pi}=\mathcal A_\pi,
 \qquad
 \mathcal S_{\Phi_\pi}=\mathcal A_\pi.
 \label{eq:twirling-three-spaces}
\end{equation}
The first identity follows from conditional expectation.  For the other two, the vectorization $A\mapsto|A\rangle\!\rangle$ identifies the support of the Choi matrix with the span of $\pi(G)$, and the products $\pi(g)^*\pi(h)$ span the same algebra.  Hence symmetry makes both the adjoint range and the noise operator system explicit, though they remain different commutant algebras \cite{Watrous2018,GirardLevick2021}.

An elementary consequence connects the channel kernel to orbit-frame orthogonality:
\begin{equation}
 \Phi_\pi(x\rtensor y)=0
 \quad\Longleftrightarrow\quad
 x\perp\mathcal A_\pi' y.
 \label{eq:commutant-kernel}
\end{equation}
This is \cite[Lemma~4.1]{HanLiu2024} in commutant form. Indeed, $\Phi_\pi=\Phi_\pi^*$ and its range is $\mathcal A_\pi'$.  For finite $G$, the left side is also equivalent to vanishing of the cross-frame operator between the two orbit families $\{\pi(g)x\}_{g\in G}$ and $\{\pi(g)y\}_{g\in G}$. Under frame terminologies, the orbit frames are strongly disjoint \cite[Lemma~4.1]{HanLiu2024}. Here ``strongly disjoint'' means that the cross-frame operator vanishes, see \cite{HanLarson2000} for the general frame-representation theory. Recent work on phase retrieval by irreducible and projective representation frames provides complementary representation side discussions \cite{Cheng2025,ChengLi2026}.

This interpretation also explains the twirl directly in frame language.
For a finite group,
\begin{equation}
 \Phi_\pi(xx^*)=\frac1{|G|}\sum_{g\in G}
 (\pi(g)x)(\pi(g)x)^*,
 \label{eq:orbit-frame-operator}
\end{equation}
the output is the normalized frame operator of the orbit of $x$.  The channel therefore identifies two pure inputs exactly when their orbit frames have different frame operators.  This is stronger than asking the two orbits to be different as sets and weaker than retaining the ordered orbit vectors.  Equation~\eqref{eq:commutant-kernel} identifies the lost cross-frame operators, while the multiplicity-free result below converts the surviving diagonal block weights back into scalar frame measurements.

\begin{example}[Irreducible twirls and qubit dephasing]
\label{ex:twirling-extremes}
If $\pi$ is irreducible on an $n$-dimensional space, then
\begin{equation*}
 \Phi_\pi(T)=\frac{\Tr(T)}nI,
\end{equation*}
the twirl is completely depolarizing and $\pr(\Phi_\pi)=1$ for $n>1$. For the other extremal case, let $G=\mathbb Z_2$ and $\pi(1)=Z=\diag(1,-1)$ on $\C^2$.  Then
\begin{equation*}
 \Phi_\pi(T)=\frac12(T+ZTZ)=\diag(T),
\end{equation*}
which is precisely the dephasing channel in Example~\ref{ex:depolarizing-dephasing}.  These two cases already show how irreducibility and separation into inequivalent characters can both erase the coherences needed for global pure-state identification.
\end{example}

\subsection{Coding and orthogonality indices}

The work in \cite{HanLiu2024} also introduces two off-diagonal quantities. The orthogonality index $\gamma(\Phi)$ is the largest $N$ for which there are nonzero $x_1,\ldots,x_N$ satisfying $\Phi(x_r\rtensor x_s)=0$ for $r\ne s$.  Its subspace version is
\begin{equation*}
 \tau(\Phi)=\max\{\dim M:\Phi(x\rtensor y)=0
 \text{ whenever }x,y\in M,\ x\perp y\}.
\end{equation*}
The corresponding exact formulas for twirling channels are collected from \cite[Theorems~3.1--3.3, 4.1 and~4.2]{HanLiu2024} in the following compact statement.

\begin{theorem}[Five indices of a twirling channel]
\label{thm:five-indices}
For the decomposition \eqref{eq:irreducible-decomposition},
\begin{equation}
 \begin{gathered}
 \alpha(\Phi_\pi)=\sum_{i=1}^d m_i,
 \qquad \beta(\Phi_\pi)=\max_{1\le i\le d}m_i,
 \qquad C_0(\Phi_\pi)=\log_2\!\left(\sum_{i=1}^d m_i\right),\\[-1mm]
 \gamma(\Phi_\pi)=\sum_{i=1}^d n_i,
 \qquad \tau(\Phi_\pi)=\max_{1\le i\le d}n_i.
 \end{gathered}
 \label{eq:twirling-five-indices}
\end{equation}
In particular, $\alpha(\Phi_\pi)=\beta(\Phi_\pi)$ exactly when $d=1$.
\end{theorem}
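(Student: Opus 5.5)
The plan is to work throughout with the Schur block formula \eqref{eq:schur-block-formula}, the identities \eqref{eq:twirling-three-spaces}, and the commutant kernel criterion \eqref{eq:commutant-kernel}. For a vector $x$ write $x=\sum_i P_ix$ with $P_ix\in\C^{m_i}\otimes H_i$. Define the multiplicity marginal $\rho_i(x,y)=\Tr_{H_i}(P_ixy^*P_i)\in M_{m_i}$ and the irreducible marginal $\omega_i(x,y)=\Tr_{\C^{m_i}}(P_ixy^*P_i)\in\B(H_i)$.

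\textbf{The indices $\alpha$ and $C_0$.} By \eqref{eq:schur-block-formula},
\[
\Phi_\pi(xx^*)=\bigoplus_i n_i^{-1}\rho_i(x,x)\otimes I_{H_i},
\]
so $\supp\Phi_\pi(xx^*)=\bigoplus_i\supp\rho_i(x,x)\otimes H_i$. Two outputs therefore have orthogonal supports exactly when the subspaces $\bigoplus_i\supp\rho_i(x_r,x_r)\subseteq\bigoplus_i\C^{m_i}$ are orthogonal. These subspaces are nonzero, so an independent set has at most $\sum_im_i$ elements. Conversely, the vectors $a_{ik}\otimes u_i$, with $\{a_{ik}\}_k$ an orthonormal basis of $\C^{m_i}$ and $u_i\in H_i$ a fixed unit vector, attain this bound. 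For $C_0$, I use that $\Phi_\pi^{\otimes q}$ is the twirl of the outer tensor representation $\pi^{\times q}$ of $G^q$. Its irreducibles $\pi_{i_1}\otimes\cdots\otimes\pi_{i_q}$ are pairwise inequivalent and occur with multiplicity $m_{i_1}\cdots m_{i_q}$. Hence $\alpha(\Phi_\pi^{\otimes q})=(\sum_im_i)^q$, and \eqref{eq:zero-error-capacity} gives the logarithm.

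\textbf{The indices $\gamma$ and $\tau$.} Because $\mathcal A_\pi'$ is a $*$-algebra containing $I$, the condition $x\perp\mathcal A_\pi'y$ is equivalent to $\mathcal A_\pi'x\perp\mathcal A_\pi'y$. An orthogonality family in the sense of \eqref{eq:commutant-kernel} thus gives mutually orthogonal nonzero $\mathcal A_\pi'$-invariant subspaces. By \eqref{eq:twirling-algebras}, such subspaces have the form $\bigoplus_i\C^{m_i}\otimes K_i$, so there are at most $\sum_i\dim H_i=\sum_in_i$ of them. The family $a\otimes u_{ik}$, with $\{u_{ik}\}_k$ an orthonormal basis of $H_i$, attains this bound.

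For $\tau$, the lower bound comes from $M=a\otimes H_i$ with $n_i=\max_jn_j$. For the upper bound, the vanishing condition says that $\rho_i(x,y)=0$ whenever $x,y\in M$ are orthogonal. A sesquilinear map vanishing on orthogonal pairs is a multiple of the inner product (polarize over an orthonormal basis). Hence $\rho_i(x,y)=\inner{x}{y}R_i$ for fixed $R_i\ge0$. Choose $i$ with $\Tr R_i>0$. Then $V=(\Tr R_i)^{-1/2}P_i|_M$ is an isometry $M\to\C^{m_i}\otimes H_i$ whose $\Tr_{H_i}$-marginal channel $X\mapsto\Tr_{H_i}(VXV^*)$ is the constant channel $X\mapsto\Tr(X)R_i/\Tr R_i$.

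\textbf{The index $\beta$, and the main obstacle.} For the lower bound, $C=\C^{m_i}\otimes u$ satisfies \eqref{eq:knill-laflamme} with $\mathcal S_{\Phi_\pi}=\mathcal A_\pi$, because $P_C(I\otimes B)P_C=\inner{Bu}{u}P_C$ and the other blocks vanish. For the upper bound, Knill--Laflamme applied to $I_{m_i}\otimes B$ says exactly that $\omega_i(x,y)=\inner{x}{y}\sigma_i$ on $C$. This is the same situation as for $\tau$ with the two tensor factors exchanged.

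So the crux of both upper bounds is one lemma. If $V:M\to\C^m\otimes\C^n$ is an isometry and the marginal $X\mapsto\Tr_{\C^m}(VXV^*)$ is constant on states, then the complementary marginal $X\mapsto\Tr_{\C^n}(VXV^*)$ is injective. The justification is that a constant complementary channel gives exact correctability, by the standard complementary-channel form of Knill--Laflamme, and then Theorem~\ref{thm:correctability-bridge} gives injectivity. An injective linear map $\B(M)\to M_m$ forces $(\dim M)^2\le m^2$, so $\dim M\le m$. This yields $\beta\le\max_im_i$ and, symmetrically, $\tau\le\max_in_i$. I expect this lemma, and the reduction to it through the polarization step, to be the main obstacle.

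Finally, since every $m_i\ge1$, we have $\sum_im_i=\max_im_i$ exactly when $d=1$, which gives the last assertion.
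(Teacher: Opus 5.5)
Your proposal is correct. For $\alpha$, $C_0$, $\gamma$ and all the lower bounds, it is the paper's argument with more detail:
\begin{itemize}
\item your support subspaces $\bigoplus_i\supp\rho_i(x,x)\otimes H_i$ are exactly the cyclic subspaces $\mathcal A_\pi x$ that the paper counts;
\item your invariant subspaces $\bigoplus_i\C^{m_i}\otimes K_i$ are built from its irreducible-factor coefficient spaces $K_i$;
\item the $C_0$ step via the outer tensor representation of $G^q$ is identical.
\end{itemize}
One small fix: in your $\gamma$ and $\tau$ families the unit vector $a$ must be taken in $\C^{m_i}$ separately for each $i$.

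The real difference is how you close the upper bounds for $\beta$ and $\tau$. The paper only says to project a code into a nonzero isotypic block and apply Knill--Laflamme. Your relations $\omega_i(x,y)=\inner{x}{y}\sigma_i$ on $C$ and $\rho_i(x,y)=\inner{x}{y}R_i$ on $M$ are precisely those projected relations. You then finish with a complementary-channel lemma, which imports the complementary form of Knill--Laflamme together with Theorem~\ref{thm:correctability-bridge}.

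That route is valid, and it makes a duality visible. Your polarization step shows that $\tau$ is the Knill--Laflamme code dimension for the commutant $\mathcal A_\pi'$. So $(\alpha,\beta)$ and $(\gamma,\tau)$ are interchanged under $\mathcal A_\pi\leftrightarrow\mathcal A_\pi'$, $m_i\leftrightarrow n_i$.

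It is, however, more machinery than you need. Your relations already give
\begin{itemize}
\item $P_C(I_{m_i}\otimes B)P_C=\Tr(\sigma_iB)P_C$ for all $B\in\B(H_i)$, and
\item $P_M(T\otimes I_{H_i})P_M=\Tr(TR_i)P_M$ for all $T\in M_{m_i}(\C)$,
\end{itemize}
where the block operators are extended by zero off the $i$th isotypic component. Choose $B=ww^*$ with $\inner{\sigma_iw}{w}>0$, respectively $T=aa^*$ with $\inner{R_ia}{a}>0$; such $w$ and $a$ exist because you chose $i$ with $\Tr\sigma_i>0$, respectively $\Tr R_i>0$. This yields $\dim C\le\rank(I_{m_i}\otimes ww^*)=m_i$ and $\dim M\le\rank(aa^*\otimes I_{H_i})=n_i$ directly. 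That rank count turns the paper's one-line description into a complete proof, with no isometry normalization and no appeal to complementary channels.
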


\begin{proof}[Proof sketch]
The noise algebra is $\mathcal A_\pi$.  Fixing one unit vector in every irreducible factor gives $\sum_im_i$ mutually $\mathcal A_\pi$-orthogonal vectors, one from each copy.  Conversely, the cyclic invariant subspaces generated by a zero-error family are nonzero and mutually orthogonal, so their number cannot exceed the number of irreducible summands counted with multiplicity.  This proves the formula for $\alpha$.  For a fixed unit $u\in H_i$, the subspace $\C^{m_i}\otimes u$ is correctable.  Projecting any correctable code into a nonzero isotypic block and applying the Knill--Laflamme relations gives the reverse bound $\dim C\le\max_i m_i$.

For $q$ independent uses, $\Phi_\pi^{\otimes q}$ is the twirl of the external tensor product representation of $G^q$.  Its multiplicities are $m_{i_1}\cdots m_{i_q}$, and hence $\alpha(\Phi_\pi^{\otimes q})=(\sum_i m_i)^q$, which proves the capacity identity in \eqref{eq:twirling-five-indices}.  On one block $\pi=m\sigma$, relation \eqref{eq:commutant-kernel} reduces to mutual orthogonality of the irreducible factor coefficient spaces and yields $\gamma(\Phi_{m\sigma})=\dim H_\sigma$.  Combining the blocks gives $\gamma=\sum_in_i$. The corresponding subspace argument also gives $\tau=\max_in_i$.  See the cited theorems for the complete arguments.
\end{proof}

The formulas display the dual role of multiplicities and irreducible dimensions.  Large $m_i$ record degrees of freedom that survive the twirl and can support messages or a correctable code subspace, while large $n_i$ measure how many directions can have their cross terms erased.

For the $\mathbb Z_2$ example, $d=2$ and all $m_i=n_i=1$, hence $\alpha=\gamma=2$ while $\beta=\tau=1$.  Together with $\pr=1$, this is the smallest example of a channel supporting two perfectly distinguishable classical messages without identifying an arbitrary pure qubit.

\subsection{Phase-retrievable subspaces}

The block formula also turns the phase-retrieval question into a positive operator measurement on a parameter space.  The cleanest exact statement is the multiplicity-free case.  The next results is equivalent to the discussion in  \cite[Proposition~5.2]{HanLiu2024}.

\begin{theorem}[Multiplicity-free compression]
\label{thm:multiplicity-free}
Assume $m_1=\cdots=m_d=1$, so that $\pi=\pi_1\oplus\cdots\oplus\pi_d$.  Let $V:\C^k\to H$ be an isometry with range $M$, put $V_i=P_iV$, and define
\begin{equation}
 E_i=V_i^*V_i\ge0,
 \qquad 1\le i\le d.
 \label{eq:compressed-effects}
\end{equation}
Then $\sum_iE_i=I_k$, and $M$ is phase retrievable for $\Phi_\pi$ if and only if the POVM $\{E_i\}_{i=1}^d$ is pure-state informationally complete on $\C^k$.
\end{theorem}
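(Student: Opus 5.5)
The plan is to compute the twirled output of a pure state in $M$ explicitly with the block formula \eqref{eq:schur-block-formula}, and then compare it with the data of the compressed POVM. First, the identity $\sum_iE_i=I_k$ is immediate: the $P_i$ are mutually orthogonal projections with $\sum_iP_i=I_H$, so
\begin{equation*}
 \sum_iV_i^*V_i=V^*\Bigl(\sum_iP_i\Bigr)V=V^*V=I_k.
\end{equation*}
Positivity of each $E_i$ is clear.

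Next, take $c\in\C^k$ and $x=Vc\in M$. Since every $m_i=1$, formula \eqref{eq:schur-block-formula} reduces to
\begin{equation*}
 \Phi_\pi(xx^*)=\bigoplus_{i=1}^d\frac{\Tr(P_ixx^*P_i)}{n_i}I_{H_i}
 =\bigoplus_{i=1}^d\frac{\norm{V_ic}^2}{n_i}I_{H_i}
 =\bigoplus_{i=1}^d\frac{\inner{c}{E_ic}}{n_i}I_{H_i}.
\end{equation*}
The summands $H_i$ are orthogonal and nonzero, and the $n_i$ are fixed. Hence
\begin{equation*}
 \Phi_\pi(xx^*)=\Phi_\pi(yy^*)\ \text{ for } x=Vc,\ y=Vc'
 \quad\Longleftrightarrow\quad
 \Tr(cc^*E_i)=\Tr(c'c'^*E_i)\ \text{ for all } i.
\end{equation*}

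Finally, I will apply Theorem~\ref{thm:channel-injectivity}. By that theorem, $M$ is phase retrievable exactly when the left-hand equality forces $xx^*=yy^*$. Because $V$ is an isometry, $xx^*=VCc^*V^*$ holds with $C=c$, that is, $xx^*=Vcc^*V^*$. The map $cc^*\mapsto Vcc^*V^*$ is injective, and it carries unit vectors to unit vectors. So $xx^*=yy^*$ if and only if $cc^*=c'c'^*$. Combining this with the displayed equivalence shows that pure-state injectivity on $M$ is the same as $\{E_i\}$ separating all pure states of $\C^k$. That is exactly pure-state informational completeness on $\C^k$.

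I expect no real obstacle. The only care needed is in the specialization of the block formula to $m_i=1$. There the multiplicity factor is $\C$, the matrix $T^{(i)}$ is the single block $P_iTP_i$, and its trace is $\norm{P_ix}^2$. The off-diagonal blocks between different $i$ vanish, which is the mechanism that erases the inter-sector coherences.
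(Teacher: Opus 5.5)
Your proposal is correct and follows the paper's proof essentially step for step. It proves $\sum_iE_i=V^*V=I_k$, specializes the Schur block formula to $\Phi_\pi(V\xi\xi^*V^*)=\bigoplus_i n_i^{-1}\inner{\xi}{E_i\xi}I_{H_i}$, and then applies Theorem~\ref{thm:channel-injectivity}, while also making explicit the step the paper leaves implicit, namely that $cc^*\mapsto Vcc^*V^*$ is injective (the stray ``$VCc^*V^*$ with $C=c$'' is just a typo for $Vcc^*V^*$).
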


\begin{proof}[Proof sketch]
Since $V$ is an isometry and $\sum_iP_i=I_H$, $\sum_iE_i=V^*V=I_k$.  Formula \eqref{eq:schur-block-formula} gives, for $\xi\in\C^k$,
\begin{equation}
 \Phi_\pi(V\xi\xi^*V^*)
 =\bigoplus_{i=1}^d
 \frac{\inner{\xi}{E_i\xi}}{n_i}I_{H_i}.
 \label{eq:multiplicity-free-output}
\end{equation}
Thus two pure states in $M$ have the same output exactly when their expectations against every $E_i$ agree.  Theorem \ref{thm:channel-injectivity} proves the equivalence. 
\end{proof}

\begin{corollary}[Weighted stability formula]
\label{cor:multiplicity-free-stability}
Under the hypotheses and notation of Theorem~\ref{thm:multiplicity-free},
\begin{equation}
 \eta_2(\Phi_\pi;M)=
 \min_{\substack{X\in\Sec(\C^k)\\\norm{X}_2=1}}
 \left(\sum_{i=1}^d
 \frac{\abs{\Tr(E_iX)}^2}{n_i}\right)^{1/2}.
 \label{eq:multiplicity-free-stability}
\end{equation}

\end{corollary}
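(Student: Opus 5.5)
The plan is to unwind the definition \eqref{eq:stability-modulus} of $\eta_2(\Phi_\pi;M)$ through the isometry $V$, and then evaluate the Hilbert--Schmidt norm of each output block using the multiplicity-free block formula \eqref{eq:schur-block-formula}.

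First we transfer the secant cone. Since $V:\C^k\to H$ is an isometry with range $M$, the map $X\mapsto VXV^*$ is a bijection from $\Herm(\C^k)$ onto $\Herm(M)$. Here $\Herm(M)$ is viewed as the operators on $H$ supported in $M$, which is how $\Sec(M)$ enters \eqref{eq:stability-modulus}. This map preserves rank and trace, so it carries $\Sec(\C^k)$ onto $\Sec(M)$. It also preserves the Hilbert--Schmidt norm, because
\begin{equation*}
 \Tr\bigl((VXV^*)^2\bigr)=\Tr(V^*VXV^*VX)=\Tr(X^2).
\end{equation*}
Hence
\begin{equation*}
 \eta_2(\Phi_\pi;M)=\min_{\substack{X\in\Sec(\C^k)\\\norm{X}_2=1}}\norm{\Phi_\pi(VXV^*)}_2 .
\end{equation*}

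Next we compute the output for a general Hermitian $X$, not only a rank-one one. With all $m_i=1$, formula \eqref{eq:schur-block-formula} reads
\begin{equation*}
 \Phi_\pi(T)=\bigoplus_i \frac{\Tr(P_iTP_i)}{n_i}I_{H_i}.
\end{equation*}
Taking $T=VXV^*$ gives
\begin{equation*}
 \Tr(P_iVXV^*P_i)=\Tr(V^*P_iVX)=\Tr(E_iX),
\end{equation*}
where we use $P_i^2=P_i$ and $V_i=P_iV$. This extends \eqref{eq:multiplicity-free-output} by linearity: the rank-one identity there, together with real linearity of both sides, also yields the general formula. The blocks act on mutually orthogonal subspaces $H_i$ of dimension $n_i$, so
\begin{equation*}
 \norm{\Phi_\pi(VXV^*)}_2^2=\sum_i n_i\cdot\frac{\abs{\Tr(E_iX)}^2}{n_i^2}=\sum_i\frac{\abs{\Tr(E_iX)}^2}{n_i}.
\end{equation*}
Taking square roots and minimizing over the normalized secant cone gives \eqref{eq:multiplicity-free-stability}.

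The only delicate point is the identification of $\Sec(M)$ with $V\,\Sec(\C^k)\,V^*$ together with the matching Hilbert--Schmidt normalization. One should check that an element of $\Herm(M)$, regarded as an operator on $H$, has the form $VXV^*$ with $X=V^*(\cdot)V$. The minimum exists because the normalized secant set is compact, as noted in Theorem~\ref{thm:channel-secant-stability}. The case $k\ge2$ is implicitly assumed so that $\Sec(\C^k)$ contains unit-norm elements. Everything else is a direct calculation.
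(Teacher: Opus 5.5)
Your proof is correct and follows essentially the same route as the paper: you transfer $\Sec(\C^k)$ onto $\Sec(M)$ via the Hilbert--Schmidt-isometric map $X\mapsto VXV^*$, evaluate $\Phi_\pi(VXV^*)=\bigoplus_i \frac{\Tr(E_iX)}{n_i}I_{H_i}$, and sum the block norms $n_i\abs{\Tr(E_iX)}^2/n_i^2$ before minimizing. The only small difference is that you derive $\Tr(P_iVXV^*P_i)=\Tr(E_iX)$ directly from the block formula, which makes explicit the paper's appeal to linearity of \eqref{eq:multiplicity-free-output}.
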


\begin{proof}[Proof sketch]
The isometry $V$ preserves Hilbert--Schmidt norms and carries $\Sec(\C^k)$ onto $\Sec(M)$.  By linearity of
\eqref{eq:multiplicity-free-output}, every Hermitian $X$ satisfies
\begin{equation*}
 \Phi_\pi(VXV^*)=
 \bigoplus_{i=1}^d\frac{\Tr(E_iX)}{n_i}I_{H_i}.
\end{equation*}
The squared Hilbert--Schmidt norm of the right side is $\sum_i\abs{\Tr(E_iX)}^2/n_i$.  Taking the minimum over normalized secants proves the formula.  
\end{proof}

The theorem shows where ordinary frame phase retrieval re-enters.  In the multiplicity-free case, the surviving scalar block norms are the outcomes of a POVM on the coordinates of the candidate subspace.  For nontrivial multiplicities, the partial traces in \eqref{eq:schur-block-formula} retain matrix-valued rather than merely scalar data. Thus, the problem can no longer be reduced to phase retrieval from d scalar-valued measurements..

\begin{example}[A two-dimensional phase-retrievable subspace]\label{ex:four-characters}
Let $G=\mathbb Z_4$ and let $\pi=\chi_1\oplus\cdots\oplus\chi_4$ be the sum of its four distinct characters.  Start with the injective map
\begin{equation*}
 T:\C^2\to\C^4,
 \qquad T(a,b)=(a,b,a+b,a+\mathrm i b),
\end{equation*}
and normalize it to the isometry $V=T(T^*T)^{-1/2}$.  The four coordinate effects of $T$ have quadratic data
\begin{equation*}
 |a|^2,\quad |b|^2,\quad |a+b|^2,\quad |a+\mathrm i b|^2.
\end{equation*}
The third determines $\operatorname{Re}(a\overline b)$ and the fourth determines $\operatorname{Im}(a\overline b)$ once $|a|^2,|b|^2$ are known.  They therefore recover $\left(\begin{smallmatrix}|a|^2&a\overline b\\\overline a b&|b|^2\end{smallmatrix}\right)$. Invertible normalization preserves phase retrieval, so Theorem~\ref{thm:multiplicity-free} makes $\ran V$ a two-dimensional phase-retrievable subspace.  Here $\beta=1$ but $\pr\ge2$, visibly separating correctability from pure-state identification.
\end{example}

The lower bound and the exact character case below combine \cite[Lemma~5.2, Corollary~5.1, Theorem~5.1 and the discussion following Theorem~5.1]{HanLiu2024}.

\begin{theorem}[Lower bound for the phase-retrievability index]\label{thm:pr-lower-bound}
For \eqref{eq:irreducible-decomposition},
\begin{equation}
 \pr(\Phi_\pi)\ge
 \max\left\{\max_i m_i,\ \left\lfloor\frac d4\right\rfloor+1\right\}.
 \label{eq:pr-lower-bound}
\end{equation}
If $\pi=\chi_1\oplus\cdots\oplus\chi_d$ is a sum of distinct characters and
\begin{equation}
 d_\C(k)\le d<d_\C(k+1),
 \label{eq:character-window}
\end{equation}
then $\pr(\Phi_\pi)=k$.
\end{theorem}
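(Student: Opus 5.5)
The plan is to prove the two lower bounds in \eqref{eq:pr-lower-bound} separately, and then obtain the character case from Theorem~\ref{thm:multiplicity-free} together with the definition \eqref{eq:d-field} of $d_\C(k)$.

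\emph{First lower bound.} For the bound $\pr(\Phi_\pi)\ge\max_i m_i$ we use Theorem~\ref{thm:correctability-bridge}. By the proof of Theorem~\ref{thm:five-indices}, the subspace $C=\C^{m_i}\otimes u$, with $u\in H_i$ a fixed unit vector, is exactly correctable. This can also be checked directly: by \eqref{eq:schur-block-formula}, for $X=P_CXP_C$ written as $X=Y\otimes uu^*$ we get $\Phi_\pi(X)=Y\otimes I_{H_i}/n_i$, which is injective in $Y$. Hence $C$ is phase retrievable, and $\pr(\Phi_\pi)\ge\beta(\Phi_\pi)=\max_i m_i$.

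\emph{Second lower bound.} For the bound $\lfloor d/4\rfloor+1$, set $k=\lfloor d/4\rfloor+1$, so that $d\ge4k-4$. Pick one unit vector $u_i\in H_i$ in a single multiplicity copy of each isotypic component. The span of $u_1,\dots,u_d$ carries a multiplicity-free structure: by \eqref{eq:schur-block-formula}, the output of a vector $\sum_i c_iu_i$ is $\bigoplus_i |c_i|^2\,(e_1e_1^*\otimes I_{H_i})/n_i$, with all cross terms killed. So the analysis of Theorem~\ref{thm:multiplicity-free} applies verbatim to isometries $V:\C^k\to\spanop\{u_i\}$, since only the diagonal data $\inner{\xi}{E_i\xi}$ survive.

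Next, take a phase-retrievable frame $\{t_1,\dots,t_d\}$ of $d$ vectors in $\C^k$. This exists because $d\ge 4k-4$ and generic frames of length $4k-4$ are phase retrievable \cite{ConcaEtAl2015}; extra vectors are harmless, and the case $k=1$ is trivial. Define $T\xi=\sum_i\inner{\xi}{t_i}u_i$ and normalize to the isometry $V=T(T^*T)^{-1/2}$, exactly as in Example~\ref{ex:four-characters}. The effects $E_i$ are then congruent, via an invertible map, to $t_it_i^*$, so they remain pure-state informationally complete. This gives a $k$-dimensional phase-retrievable subspace.

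\emph{Character case.} Suppose $\pi=\chi_1\oplus\cdots\oplus\chi_d$ and $d_\C(k)\le d<d_\C(k+1)$. For $\pr\ge k$, repeat the previous construction with a minimum phase-retrievable frame of $d_\C(k)$ vectors, padded with zero vectors or with repeats up to length $d$. For $\pr\le k$, suppose $M$ has dimension $k+1$ and is phase retrievable. Here all $n_i=1$ and $E_i=V_i^*V_i$ has rank at most one, say $E_i=w_iw_i^*$. Theorem~\ref{thm:multiplicity-free} then produces a phase-retrievable rank-one family of $d<d_\C(k+1)$ vectors in $\C^{k+1}$. Zero vectors may be discarded, so this contradicts the definition of $d_\C(k+1)$.

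\emph{Main obstacle.} The main obstacle is the second bound when multiplicities are present, since Theorem~\ref{thm:multiplicity-free} assumes $m_i=1$. The point to check carefully is that restricting to one vector per isotypic component really yields scalar diagonal data and no surviving coherences. This follows from the vanishing of the off-diagonal blocks between inequivalent sectors in \eqref{eq:schur-block-formula}. A secondary point is that the generic-frame result \cite{ConcaEtAl2015} supplies the count $4k-4$, which is where the $\lfloor d/4\rfloor$ comes from.
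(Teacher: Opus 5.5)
Your proposal is correct and follows the paper's proof essentially step by step: the multiplicity subspace $\C^{m_i}\otimes u$ for the first bound, one vector per inequivalent sector together with Theorem~\ref{thm:multiplicity-free} and a generic frame of length $4k-4$ for the second, and rank-one compressed effects plus the definition of $d_\C$ for the character case. The only additions are bookkeeping the paper leaves implicit: the explicit normalization $V=T(T^*T)^{-1/2}$, padding a minimal frame to length $d$, and testing only dimension $k+1$ (which suffices since phase retrievability passes to subspaces) instead of invoking monotonicity of $d_\C$.
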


\begin{proof}[Proof sketch]
The subspace $\C^{m_i}\otimes u$, for a fixed unit $u\in H_i$, retains its full multiplicity matrix under the partial trace and is phase retrievable. This gives $\pr\ge\max_i m_i$, which also follows from $\beta\le\pr$.  Select one copy of each irreducible sector.  For $k=\lfloor d/4\rfloor+1\ge2$, the inequality $d\ge4k-4$ allows a phase-retrievable frame of $d$ vectors to be placed into those sectors, and Theorem~\ref{thm:multiplicity-free} yields a $k$-dimensional subspace. The case $k=1$ is immediate from any one-dimensional subspace.

In the character case every compressed effect has rank at most one.  Any $L$-dimensional phase-retrievable subspace would therefore yield a phase-retrievable vector frame of at most $d$ nonzero vectors in $\C^L$. The definition of $d_\C(L)$ and \eqref{eq:character-window} give the upper bound $L\le k$, while the preceding construction gives equality.
\end{proof}

The newer exact value $d_\C(4)=11$ \cite{Huang2026} therefore sharpens the character case: a direct sum of $d$ distinct characters has $\pr=3$ for $8\le d\le10$, while $d=11$ is the first length at which the exact formula permits a four-dimensional phase-retrievable subspace.  Since $d_\C(5)=16$, the same formula gives $\pr=4$ throughout $11\le d\le15$ \cite{ConcaEtAl2015,HeinosaariMazzarellaWolf2013}.

%%%%%%%%%%%%%%%%%%%%%%%%%%%%%%%%%%%%%%%%%%%%%%%%%%%%%%%%%%%%%%%%%%%%%%%%%%%%%%%%%%%%%%%%%%%%%%%%%%%%%%%%%%%%%%%%%
%%%%%%%%%%%%%%%%%%%%%%%%%%%%%%%%%%%%%%%%%%%%%%%%%%%%%%%%%%%%%%%%%%%%%%%%%%%%%%%%%%%%%%%%%%%%%%%%%%%%%%%%%%%%%%%%%
%%%%%%%%%%%%%%%%%%%%%%%%%%%%%%%%%%%%%%%%%%%%%%%%%%%%%%%%%%%%%%%%%%%%%%%%%%%%%%%%%%%%%%%%%%%%%%%%%%%%%%%%%%%%%%%%%
%%%%%%%%%%%%%%%%%%%%%%%%%%%%%%%%%%%%%%%%%%%%%%%%%%%%%%%%%%%%%%%%%%%%%%%%%%%%%%%%%%%%%%%%%%%%%%%%%%%%%%%%%%%%%%%%%
%%%%%%%%%%%%%%%%%%%%%%%%%%%%%%%%%%%%%%%%%%%%%%%%%%%%%%%%%%%%%%%%%%%%%%%%%%%%%%%%%%%%%%%%%%%%%%%%%%%%%%%%%%%%%%%%%

\section{Conclusion and outlook}
\label{sec:conclusion}

The rank-one lift gives a single language between phaseless frame data and pure quantum states.  In that language, a channel retains every pure input exactly when its kernel avoids the normalized rank-two secants, or, equivalently, when its adjoint range contains enough input observables to separate those secants.  Compactness then turns exact separation into the two-sided finite-dimensional estimate in Theorem~\ref{thm:channel-secant-stability}.

Kraus coordinates remain valuable when their field dependence is kept explicit.  The matrix-valued relative spectrum gives an all-rank phase-retrieval characterization over real Hilbert spaces.  Over complex Hilbert spaces it is necessary in every Choi rank and sufficient in Choi rank two, while Example~\ref{ex:complex-higher-rank-gap} exhibits the higher-rank limitation.  The normalized pure-state secant criterion in Theorem~\ref{thm:channel-secant-stability} supplies the corresponding field-uniform formulation.  Conversely, phase-retrievable Parseval frames give physically normalized channel constructions with prescribed Choi rank.

The operational comparisons are equally sharp.  Pure-state identification, perfect one-shot discrimination, and exact correction are governed by different operator spaces.  Correctability implies both identification and a zero-error alphabet on the code, but neither phase retrievability nor the one-shot independence number controls the other for arbitrary channels. For twirling channels, Schur averaging makes this separation transparent: multiplicities determine the zero-error and correction indices, irreducible dimensions determine the orthogonality indices, and multiplicity-free compressions reduce phase-retrievable subspaces to ordinary POVMs and frames.  Corollary~\ref{cor:multiplicity-free-stability} adds a quantitative version of this reduction: the channel modulus becomes a weighted lower frame bound on the compressed measurement.

\subsection*{Related uses of frames in quantum information}

The connection developed in this survey reflects a broader interplay between frame theory and quantum information theory. Rank-one POVMs correspond, through their measurement vectors, to normalized tight frames, and the canonical frame construction has a least-squares quantum measurement interpretation \cite{EldarForney2002}.  Informationally complete measurements can more generally be organized as frames in Hilbert--Schmidt operator space, duals yield reconstruction formulas, while group covariance and tightness produce structured and statistically favorable measurements \cite{DArianoPerinottiSacchi2004,Scott2006}.  Operator-valued frames also provide dilation and representation tools for structured quantum channels \cite{HanLiMengTang2011}.  These works concern reconstruction and channel structure beyond the particular pure-state channel injectivity question of this work, but they rely on the same basic principles: redundancy, tightness, duality, and symmetry.

Several more recent developments make the connection computational and statistical.  Quantum-detection problems have been studied with continuous and multi-window Gabor frames \cite{HanHuLiu2021,HanHuLiuWang2022}. Adaptive informationally complete POVMs have been optimized to reduce measurement variance in variational quantum algorithms \cite{GarciaPerezEtAl2021}.  In shadow tomography, general measurement frames express classical shadow estimators through dual frames, and noncanonical duals can be optimized to improve observable estimation \cite{InnocentiEtAl2023,FischerEtAl2024}.  Recent work on inclusion-minimal rank-one POVMs further studies measurements that are pure-state informationally complete but cease to be so after any outcome is removed \cite{EdidinGonzalezTamo2025}.  A complementary channel side development uses the same informational completeness viewpoint: coherent interferometric coupling of operator-valued frames introduces cross terms that can enhance phase retrievability \cite{LiuHanNour2026}.  The exact eleven outcome result in $\C^4$ discussed earlier further sharpens the measurement side boundary \cite{Huang2026}.

\subsection*{Open questions and future directions}

This wider perspective suggests several concrete problems.  One is to minimize the number of physically normalized output effects while maximizing the combined channel and measurement stability bound.  A second is to relate the normalized secant modulus directly to sample complexity, dual-frame variance, and shadow-estimation performance.  A third is to determine exact phase-retrievability indices and stability moduli for covariant channels with nontrivial multiplicity spaces, where the surviving data are matrix valued.  It is also natural to extend the channel criteria from pure states to bounded-rank state families while retaining practical POVM constructions.  Finally, it is natural to ask which additional algebraic or symmetry assumptions ensure that robust pure-state identification coexists with nontrivial zero-error alphabets or correctable codes. The examples in this chapter show that this implication fails for general channels, but it remains an interesting question for restricted classes of channels.

%%%%%%%%%%%%%%%%%%%%%%%%%%%%%%%%%%%%%%%%%%%%%%%%%%%%%%%%%%%%%%%%%%%%%%%%%%%%%%%%%%%%%%%%%%%%%%%%%%%%%%%%%%%%%%%%%
%%%%%%%%%%%%%%%%%%%%%%%%%%%%%%%%%%%%%%%%%%%%%%%%%%%%%%%%%%%%%%%%%%%%%%%%%%%%%%%%%%%%%%%%%%%%%%%%%%%%%%%%%%%%%%%%%
%%%%%%%%%%%%%%%%%%%%%%%%%%%%%%%%%%%%%%%%%%%%%%%%%%%%%%%%%%%%%%%%%%%%%%%%%%%%%%%%%%%%%%%%%%%%%%%%%%%%%%%%%%%%%%%%%
%%%%%%%%%%%%%%%%%%%%%%%%%%%%%%%%%%%%%%%%%%%%%%%%%%%%%%%%%%%%%%%%%%%%%%%%%%%%%%%%%%%%%%%%%%%%%%%%%%%%%%%%%%%%%%%%%
%%%%%%%%%%%%%%%%%%%%%%%%%%%%%%%%%%%%%%%%%%%%%%%%%%%%%%%%%%%%%%%%%%%%%%%%%%%%%%%%%%%%%%%%%%%%%%%%%%%%%%%%%%%%%%%%%

\end{document}